\documentclass[sn-mathphys-num]{sn-jnl}

\usepackage{amsmath,amssymb,amsthm}
\usepackage{graphicx}
\usepackage{hyperref}
\usepackage{tikz}
\usetikzlibrary{cd}

\theoremstyle{thmstyleone}
\newtheorem{theorem}{Theorem}[section]
\newtheorem{proposition}[theorem]{Proposition}
\newtheorem{lemma}[theorem]{Lemma}
\newtheorem{corollary}[theorem]{Corollary}
\theoremstyle{thmstyletwo}

\newtheorem{remark}{Remark}[section]
\theoremstyle{thmstylethree}
\newtheorem{definition}{Definition}[section]

\begin{document}


\title[Line Knowledge Digraphs and Sheaf Semantics]{From Line Knowledge Digraphs to Sheaf Semantics:
A Categorical Framework for Knowledge Graphs}


\author*{\fnm{Moses} \sur{Boudourides}}\email{Moses.Boudourides@northwestern.edu}

\affil{\orgdiv{School of Professional Studies},
          \orgname{Northwestern University},
          \orgaddress{\city{Evanston}, \state{IL}, \country{USA}}}


\abstract{This paper proposes a categorical framework for knowledge graphs
linking combinatorial graph structure with topos-theoretic semantics.
Knowledge graphs are represented as labelled directed multigraphs and
analysed through incidence matrices and line knowledge digraph
constructions. The graph induces a free category whose morphisms
correspond to relational paths. To model context-dependent meaning, a
Grothendieck topology is defined on the free category generated by the graph
leading to a topos of sheaves that supports local-to-global semantic reasoning. The framework connects graph-theoretic structure, categorical composition, and sheaf semantics in a unified mathematical model for contextual relational reasoning.}


\keywords{Knowledge graphs, Category theory, Topos theory,
          Sheaf semantics, Graph theory, Grothendieck topology}



\maketitle

\section{Introduction}\label{sec:intro}

Knowledge graphs are widely used structures for representing relational
data. They encode entities and relations through labelled triples and are
fundamental to semantic web technologies, digital humanities, and machine
learning~\cite{hogan2021kg,nickel2016review}.

Recent work in digital humanities, computational social science, and computational cultural analysis
has emphasised the importance of relational data representations for
studying social practices, artistic and cultural production, and interpretation. Approaches such as
distant reading analyse cultural corpora through structural patterns
rather than through individual artefacts \cite{moretti2013distantreading}.
Similarly, cultural analytics investigates large collections of visual
and cultural objects using computational and network-based methods
\cite{manovich2020culturalanalytics}. Within these contexts, knowledge
graphs provide a natural formalism for representing relations among
artworks, artists, historical periods, and iconographic motifs.

While their combinatorial structure is well understood, the semantic
structure of knowledge graphs remains less formally characterised. In
particular, standard graph-database models do not provide a principled
account of context-dependent or multi-perspective interpretations of the
same underlying facts.

This paper develops a categorical framework for analysing knowledge graphs.
Our approach connects combinatorial constructions on triples with
categorical and logical structures derived from the graph. In particular,
we study incidence-based line digraph constructions on triples, interpret
knowledge graphs as generating free categories, and investigate how these
categorical structures can be enriched through sheaf-theoretic and
topos-theoretic methods. The resulting framework links graph-theoretic
incidence constructions with categorical composition and local-to-global
semantic reasoning. 

Beyond establishing the basic categorical framework, the paper also
introduces a structural result relating different semantic
interpretations of a knowledge graph. By equipping the free category
$\mathcal{C}(K)$ with two distinct Grothendieck topologies—a
path-covering topology and an atomic topology—we obtain two associated
topoi of sheaves. We prove that the identity functor on
$\mathcal{C}(K)$ induces an essential geometric morphism between these
topoi. This result shows that different interpretive logics on the same
knowledge graph can be related through canonical categorical
constructions.

\subsection{Categorical viewpoint}

Knowledge graphs admit a natural categorical interpretation through
the free category generated by the graph.

Given a knowledge graph $\mathcal{K}=(E,P,T)$ (see \ref{def:kg}), one can construct the free
category $\mathbf{C}(\mathcal{K})$ (see \ref{sec:freecat}), whose objects are the entities in $E$
and whose generating morphisms correspond to triples
$(h,p,t)\in T$ interpreted as arrows
\[
h \xrightarrow{p} t .
\]
Morphisms in $\mathbf{C}(\mathcal{K})$ are finite paths of triples and
composition is given by concatenation of paths
\cite{maclane1998cwm,awodey2010,borceux1994handbook}.

Within this categorical framework the head and tail maps
\[
h,t : T \to E
\]
can be interpreted as domain and codomain assignments for the generating
morphisms. The incidence matrices introduced in the next section encode
these maps combinatorially and lead naturally to line knowledge digraph
constructions that capture structural relations between triples.

\subsection{Topos-theoretic perspective}

The categorical structure generated by a knowledge graph can be enriched
by equipping the free category $\mathbf{C}(\mathcal{K})$ with a suitable
Grothendieck topology (see \ref{sec:sites}). The resulting category of sheaves (see \ref{sec:topos})
\[
\mathbf{Sh}(\mathbf{C}(\mathcal{K}),J)
\]
forms a Grothendieck topos. Topos theory provides a logical environment
in which relational structures can be interpreted through local-to-global
principles~\cite{maclane1992sheaves,johnstone2002elephant}. In this
setting, compatible local pieces of relational information can be glued
together into global interpretations.

\subsection{Overview of contributions}

This paper develops a categorical framework for analysing knowledge
graphs that connects combinatorial graph structure with categorical
and topos-theoretic semantics. Incidence-based constructions on
triples lead to line knowledge digraphs capturing structural relations
between triples. The knowledge graph is then interpreted as generating
a free category whose morphisms correspond to relational paths.
Equipping this category with a Grothendieck topology yields a sheaf
topos providing a formal environment for contextual semantic
interpretation.

Sheaves and Topos Construction

\section{Knowledge Graphs as Edge-Labelled Digraphs}\label{sec:kg}

From a combinatorial viewpoint, knowledge graphs generalise directed
graphs by allowing edges to carry semantic labels representing
relations. This perspective connects knowledge graphs with classical
graph-theoretic structures while preserving the relational semantics
encoded by predicates. Such representations make it possible to apply
matrix-based and structural methods from graph theory to the analysis
of relational data \cite{harary1969}.

Thus, knowledge graphs can be represented as directed edge-labelled
multigraphs. This combinatorial representation provides the basis
for matrix-based and categorical analysis.

Higher-order generalisations of graph structures are also frequently
considered in the analysis of relational data. In particular,
simplicial complexes and hypergraphs allow interactions among more
than two entities to be represented within a single relational object.
Such structures are widely used in network science and computational
topology to capture multi-way relationships that cannot be reduced to
binary edges \cite{ghrist2014,battiston2020}.
In categorical terms, a $2$-simplex representing a ternary relation
can be interpreted as a $2$-cell in a higher-dimensional categorical
structure encoding multi-entity interactions.

Although the present work focuses on the triple-based structure of
knowledge graphs, the categorical framework developed below naturally
extends to such higher-dimensional relational models. In particular,
hypergraph and simplicial constructions may be interpreted as
generating higher-order categorical structures whose morphisms encode
multi-entity interactions. The incidence constructions and sheaf
semantics introduced here therefore provide a foundation that can be
extended to richer relational settings.

\begin{definition}[Knowledge graph]\label{def:kg}
A \emph{knowledge graph} is a triple
\[
\mathcal{K}=(E,P,T)
\]
where
\begin{itemize}
\item $E$ is a set of entities,
\item $P$ is a set of predicates,
\item $T \subseteq E \times P \times E$ is a set of triples.
\end{itemize}
A triple $\tau=(h,p,t)\in T$ represents a directed labelled edge
\[
h \xrightarrow{p} t .
\]
\end{definition}

Fix orderings
\[
E=\{e_1,\dots,e_n\}, \qquad
T=\{\tau_1,\dots,\tau_m\}.
\]

\subsection{Incidence matrices}

Incidence matrices provide a compact algebraic representation of the
interaction between vertices and edges in a graph. In the present
setting they encode the relationships between entities and triples,
allowing combinatorial properties of the knowledge graph to be studied
through linear-algebraic operations on matrices
\cite{harary1969,diestel2017}.

\begin{definition}[Head and tail incidence matrices]
The \emph{head-incidence matrix}
\[
H^{(h)}\in\{0,1\}^{n\times m}
\]
is defined by
\[
H^{(h)}_{ij}=
\begin{cases}
1 & \text{if } e_i=h(\tau_j),\\
0 & \text{otherwise}.
\end{cases}
\]

The \emph{tail-incidence matrix}
\[
H^{(t)}\in\{0,1\}^{n\times m}
\]
is defined by
\[
H^{(t)}_{ij}=
\begin{cases}
1 & \text{if } e_i=t(\tau_j),\\
0 & \text{otherwise}.
\end{cases}
\]
\end{definition}

Each column of $H^{(h)}$ contains exactly one non-zero entry corresponding
to the head of the triple, while each column of $H^{(t)}$ contains exactly
one non-zero entry corresponding to the tail.

\begin{proposition}\label{prop:incidence-symmetry}
Let $\mathcal{K} = (E, P, T)$ be a knowledge graph, and let $H^{(h)}$ and $H^{(t)}$ denote the head and tail incidence matrices of $\mathcal{K}$, respectively, as introduced in Section~\ref{sec:kg}. Then the matrices

\[
(H^{(h)})^\top H^{(h)}
\quad\text{and}\quad
(H^{(t)})^\top H^{(t)}
\]

\noindent are symmetric and encode shared-head and shared-tail relations between
triples.
\end{proposition}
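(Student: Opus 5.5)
The plan is to compute the entries of the two Gram matrices directly from the definition of the incidence matrices and read off both claims. Symmetry needs no graph structure at all: for any real matrix $M$ we have $(M^\top M)^\top = M^\top (M^\top)^\top = M^\top M$. Applying this with $M = H^{(h)}$ and $M = H^{(t)}$ gives symmetry of $(H^{(h)})^\top H^{(h)}$ and $(H^{(t)})^\top H^{(t)}$, both of which are $m\times m$ matrices indexed by triples.

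For the semantic content, I will expand the $(j,k)$ entry:
\[
\bigl((H^{(h)})^\top H^{(h)}\bigr)_{jk} \;=\; \sum_{i=1}^{n} H^{(h)}_{ij} H^{(h)}_{ik}.
\]
By the observation following the definition, each column of $H^{(h)}$ has exactly one nonzero entry, namely a $1$ in the row of $h(\tau_j)$. Hence the product $H^{(h)}_{ij}H^{(h)}_{ik}$ is nonzero only when $e_i = h(\tau_j) = h(\tau_k)$, and there is at most one such $i$. Therefore the entry equals $1$ if $h(\tau_j)=h(\tau_k)$ and $0$ otherwise. The same computation with $H^{(t)}$ shows that $\bigl((H^{(t)})^\top H^{(t)}\bigr)_{jk}=1$ exactly when $t(\tau_j)=t(\tau_k)$. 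In particular the diagonal entries are all $1$, because every triple shares its head and its tail with itself.

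Finally, I will interpret the result. These matrices are precisely the adjacency matrices, with loops, of the equivalence relations ``shares a head'' and ``shares a tail'' on $T$. Symmetry of the matrices reflects the symmetry of these relations. After a permutation of the ordering of $T$, each matrix becomes block-diagonal, with all-ones blocks indexed by the fibres of $h$ (respectively $t$). The only real obstacle is making ``encode'' precise. I will fix its meaning as this entrywise characterization, noting that the entries are $0/1$ rather than counts precisely because each column has a single nonzero entry. This point is where the knowledge-graph structure, in which each triple has a single head and a single tail, is actually used.
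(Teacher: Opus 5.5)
Your proposal is correct and takes the same route as the paper. You expand the entry as $\sum_i H^{(h)}_{ij}H^{(h)}_{ik}$ and use that each column has a single nonzero entry to get $1$ exactly when the heads (resp.\ tails) coincide; the only cosmetic difference is that you obtain symmetry from the general identity $(M^\top M)^\top = M^\top M$, whereas the paper says symmetry "follows immediately" without spelling out the reason.
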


\begin{proof}
The $(i,j)$ entry of $(H^{(h)})^\top H^{(h)}$ equals

\[
\sum_{k=1}^n H^{(h)}_{ki} H^{(h)}_{kj}.
\]

This counts entities that appear as heads of both triples $\tau_i$ and
$\tau_j$. Since each triple has exactly one head, the value is $1$ when
the heads coincide and $0$ otherwise. Symmetry follows immediately.
The argument for $(H^{(t)})^\top H^{(t)}$ is identical.
\end{proof}

\begin{lemma}\label{lem:line-operator}
Let $\mathcal{K}=(E,P,T)$ be a knowledge graph with head and tail
incidence matrices $H^{(h)}$ and $H^{(t)}$. Define

\[
A_{\mathrm{out}}=(H^{(h)})^\top H^{(h)}-I_m,
\qquad
A_{\mathrm{in}}=(H^{(t)})^\top H^{(t)}-I_m
\]

where $I_m$ is the $m\times m$ identity matrix.

Then $A_{\mathrm{out}}$ and $A_{\mathrm{in}}$ are adjacency matrices of
the out-line and in-line knowledge digraphs respectively.
\end{lemma}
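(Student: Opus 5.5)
The argument reduces to Proposition~\ref{prop:incidence-symmetry} plus an entrywise check, once we fix what the out-line and in-line knowledge digraphs are. They have not been formally defined yet, so we take them to be the digraphs on vertex set $T=\{\tau_1,\dots,\tau_m\}$ in which, for $i\neq j$:
\begin{itemize}
\item $\tau_i\to\tau_j$ is an arc of the out-line digraph exactly when $h(\tau_i)=h(\tau_j)$;
\item $\tau_i\to\tau_j$ is an arc of the in-line digraph exactly when $t(\tau_i)=t(\tau_j)$.
\end{itemize}
There are no loops in either digraph. The claim then becomes the entrywise identity
\[
(A_{\mathrm{out}})_{ij}=\begin{cases}1 & i\neq j,\ h(\tau_i)=h(\tau_j),\\ 0 & \text{otherwise},\end{cases}
\]
together with the analogous identity for $A_{\mathrm{in}}$ and tails.

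Next we compute the entries of $(H^{(h)})^\top H^{(h)}$, as in the proof of Proposition~\ref{prop:incidence-symmetry}. The $(i,j)$ entry is $\sum_k H^{(h)}_{ki}H^{(h)}_{kj}$. Each column of $H^{(h)}$ has exactly one nonzero entry, so this sum equals $[h(\tau_i)=h(\tau_j)]\in\{0,1\}$. The matrix is therefore a $0/1$ matrix.

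On the diagonal, the condition $h(\tau_i)=h(\tau_i)$ always holds, so every diagonal entry equals $1$. Subtracting $I_m$ makes the diagonal exactly zero, with no negative entries. Off the diagonal, subtracting $I_m$ changes nothing, so $(A_{\mathrm{out}})_{ij}=[h(\tau_i)=h(\tau_j)]$ for $i\neq j$. This matches the arc relation of the out-line digraph, so $A_{\mathrm{out}}$ is its adjacency matrix.

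The in-line case is the same computation with $H^{(t)}$ and tails in place of $H^{(h)}$ and heads. By the symmetry in Proposition~\ref{prop:incidence-symmetry}, both digraphs are symmetric, so each arc comes paired with its reverse.

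The only delicate step is the definitional one. If the paper instead intends the classical line digraph, with $\tau_i\to\tau_j$ when $t(\tau_i)=h(\tau_j)$, the corresponding matrix would be $(H^{(t)})^\top H^{(h)}$, not the ones in the statement. I would therefore state the shared-head and shared-tail reading explicitly, consistent with the wording of Proposition~\ref{prop:incidence-symmetry}.

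Multiplicity needs no separate treatment. Since $T$ is a set, distinct indices $i\neq j$ correspond to distinct triples. Moreover, each entry counts at most one shared entity, so no entry exceeds $1$.
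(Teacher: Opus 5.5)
Your proof is correct and follows the paper's argument: you compute the $(i,j)$ entry of $(H^{(h)})^\top H^{(h)}$ as the indicator of $h(\tau_i)=h(\tau_j)$, observe that the diagonal is all ones, and subtract $I_m$ to recover the arc relation of $L_{\mathrm{out}}(\mathcal{K})$, then repeat with tails for $A_{\mathrm{in}}$. Your provisional definition of the line digraphs coincides exactly with the one the paper gives later in Section~\ref{sec:line}.
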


\begin{proof}
The entry $(i,j)$ of $(H^{(h)})^\top H^{(h)}$ equals

\[
\sum_{k=1}^{n} H^{(h)}_{ki}H^{(h)}_{kj}.
\]

This quantity equals $1$ precisely when the triples $\tau_i$ and
$\tau_j$ share the same head entity and $0$ otherwise. The diagonal
entries equal $1$ because every triple shares its head with itself.
Subtracting the identity matrix therefore removes these self-incidences,
yielding

\[
A_{\mathrm{out}}(i,j)=
\begin{cases}
1 & \text{if } i\neq j \text{ and } h(\tau_i)=h(\tau_j),\\
0 & \text{otherwise}.
\end{cases}
\]

This coincides with the adjacency definition of the out-line knowledge
digraph. The same argument applied to $H^{(t)}$ yields the matrix
$A_{\mathrm{in}}$ for the in-line digraph.
\end{proof}

\begin{proposition}\label{prop:spectrum}
Let $\mathcal{K}=(E,P,T)$ be a knowledge graph and let

\[
A_{\mathrm{out}}=(H^{(h)})^\top H^{(h)}-I_m
\]

\noindent be the adjacency matrix of the out-line knowledge digraph.
Let $m_e$ denote the number of triples whose head entity is $e$.

Then the spectrum of $A_{\mathrm{out}}$ consists of the eigenvalues

\[
\{\,m_e-1 : e\in E\,\}
\]

\noindent together with $-1$ with multiplicity

\[
m-\lvert E_h\rvert
\]

where $E_h\subseteq E$ is the set of entities that occur as heads of
triples.
\end{proposition}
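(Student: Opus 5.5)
Since $(H^{(h)})^\top H^{(h)}$ is the Gram matrix of the columns of $H^{(h)}$, I will first compute its spectrum and then shift everything by $-1$, because subtracting $I_m$ moves each eigenvalue down by one and leaves the eigenvectors unchanged. By Proposition~\ref{prop:incidence-symmetry} and the proof of Lemma~\ref{lem:line-operator}, the $(i,j)$ entry of $(H^{(h)})^\top H^{(h)}$ equals $1$ exactly when $h(\tau_i)=h(\tau_j)$. Let me reorder the triples so that those sharing a common head are consecutive. This reordering is a permutation similarity, so it does not change the spectrum, and after it the matrix becomes block diagonal:
\[
(H^{(h)})^\top H^{(h)} \;\cong\; \bigoplus_{e\in E_h} J_{m_e},
\]
where $J_k$ is the $k\times k$ all-ones matrix. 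The blocks are indexed only by $E_h$, because every triple has exactly one head, and so $\sum_{e\in E_h} m_e = m$.

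Next I will use the standard spectrum of $J_k$. It has eigenvalue $k$ with eigenvector $\mathbf{1}$, and eigenvalue $0$ with multiplicity $k-1$, namely on the subspace of vectors whose coordinates sum to zero. Consequently $J_{m_e}-I_{m_e}$ has eigenvalue $m_e-1$ once and eigenvalue $-1$ with multiplicity $m_e-1$. Collecting these over all blocks, $A_{\mathrm{out}}$ has the eigenvalues $m_e-1$ for $e\in E_h$, each listed once per head entity, together with $-1$ of multiplicity $\sum_{e\in E_h}(m_e-1)=m-|E_h|$.

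The main obstacle is not a computation but the exact form of the statement, which I would correct in two ways. First, the statement indexes the eigenvalues $m_e-1$ over all of $E$, whereas the spectrum of the $m\times m$ matrix only contains them for $e\in E_h$. An entity with $m_e=0$ would contribute a spurious eigenvalue $-1$ that pushes the total count past $m$. I would therefore prove the statement with $e$ ranging over $E_h$ and note that this is the intended reading. Second, the eigenvalues $m_e-1$ are to be counted as a multiset, since several heads may have the same out-degree. Moreover, if $m_e=1$ the eigenvalue is $0$, while if some $m_e-1$ happened to equal $-1$ it would merge with the $-1$ eigenspace; since $m_e\ge 1$ on $E_h$, the latter never occurs. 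With these conventions fixed, the block-diagonal argument proves the claim. Because the matrix is symmetric, the algebraic and geometric multiplicities coincide, so no further work is needed.
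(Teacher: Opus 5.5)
Your argument is correct and is essentially the paper's proof: both reduce $A_{\mathrm{out}}$, up to a permutation of the triples, to a block-diagonal matrix with blocks $J_{m_e}-I_{m_e}$ indexed by the head fibres $T_e$, and then read off the spectrum blockwise. Your correction that $e$ must range over $E_h$, with the values $m_e-1$ counted as a multiset, is genuinely needed: for $E=\{A,B\}$ and $T=\{(A,p,B)\}$ one has $A_{\mathrm{out}}=(0)$, yet the statement as written would also list $m_B-1=-1$. The paper's proof tacitly uses your reading, since empty fibres contribute no block.
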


\begin{proof}
By Proposition~\ref{prop:component-partition}, the out-line digraph
decomposes as a disjoint union of complete directed graphs on the sets

\[
T_e=\{\tau\in T:h(\tau)=e\}.
\]

If $|T_e|=m_e$, the adjacency matrix of this component is

\[
J_{m_e}-I_{m_e}
\]

\noindent where $J_{m_e}$ is the all-ones matrix.

The eigenvalues of $J_{m_e}-I_{m_e}$ are well known (see \cite[Ch.~1]{brouwer2012}): $m_e-1$ with
multiplicity $1$ and $-1$ with multiplicity $m_e-1$.
Since the full adjacency matrix is block diagonal over the components,
its spectrum is the union of the spectra of these blocks.
\end{proof}

\section{Line Knowledge Digraphs}\label{sec:line}

The incidence representation introduced above allows us to construct
derived graphs whose vertices correspond to triples and whose adjacency
records shared incidence relations.

\begin{definition}[Out-line and in-line knowledge digraphs]
Let $\mathcal{K}=(E,P,T)$ be a knowledge graph with
$T=\{\tau_1,\dots,\tau_m\}$.

The \emph{out-line knowledge digraph}
$L_{\mathrm{out}}(\mathcal{K})$ has vertex set $T$ and a directed edge
$\tau_i\to\tau_j$ whenever $h(\tau_i)=h(\tau_j)$ and $i\neq j$.

The \emph{in-line knowledge digraph}
$L_{\mathrm{in}}(\mathcal{K})$ has vertex set $T$ and a directed edge
$\tau_i\to\tau_j$ whenever $t(\tau_i)=t(\tau_j)$ and $i\neq j$.
\end{definition}

\subsection{Incidence interpretation}

The construction of line digraphs has classical precedents in graph
theory, where vertices represent edges of the original graph and
adjacency reflects incidence relations between edges
\cite{beineke1970,rooij1965}. The present formulation adapts this idea to the
setting of knowledge graphs by treating triples as vertices and
identifying adjacency through shared head or tail entities.

Let $H^{(h)}$ and $H^{(t)}$ denote the head and tail incidence matrices.
Define
\[
M_{\mathrm{out}}=(H^{(h)})^\top H^{(h)}, \qquad
M_{\mathrm{in}}=(H^{(t)})^\top H^{(t)}.
\]

The matrices $M_{\mathrm{out}}$ and $M_{\mathrm{in}}$ therefore encode
the shared-head and shared-tail relations between triples. Similar
matrix formulations appear in classical treatments of line graphs,
where adjacency relations between edges are obtained through incidence
matrices of the underlying graph \cite{harary1969,beineke1970}.

\begin{proposition}
For $i\neq j$ we have
\[
\tau_i\to\tau_j \text{ in } L_{\mathrm{out}}(\mathcal{K})
\quad\text{iff}\quad
M_{\mathrm{out}}(i,j)>0,
\]
and
\[
\tau_i\to\tau_j \text{ in } L_{\mathrm{in}}(\mathcal{K})
\quad\text{iff}\quad
M_{\mathrm{in}}(i,j)>0.
\]
\end{proposition}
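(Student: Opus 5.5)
The argument is a direct entrywise computation, in the same style as the proofs of Proposition~\ref{prop:incidence-symmetry} and Lemma~\ref{lem:line-operator}. The plan is to compute $M_{\mathrm{out}}(i,j)$ explicitly, compare it with the definition of the edge relation in $L_{\mathrm{out}}(\mathcal{K})$, and then repeat the same reasoning verbatim for the in-line case with tails in place of heads.

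First, I will expand the product:
\[
M_{\mathrm{out}}(i,j)=\sum_{k=1}^{n} H^{(h)}_{ki}H^{(h)}_{kj}.
\]
Each summand lies in $\{0,1\}$. A summand equals $1$ exactly when $e_k=h(\tau_i)$ and $e_k=h(\tau_j)$ hold simultaneously. Column $i$ of $H^{(h)}$ has a single nonzero entry, namely in the row of $h(\tau_i)$. Hence the sum has at most one nonzero term, and that term is nonzero iff $h(\tau_i)=h(\tau_j)$. So $M_{\mathrm{out}}(i,j)\in\{0,1\}$, and $M_{\mathrm{out}}(i,j)>0$ iff $h(\tau_i)=h(\tau_j)$.

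Second, I will invoke the hypothesis $i\neq j$. Under it, the defining condition for an edge $\tau_i\to\tau_j$ in $L_{\mathrm{out}}(\mathcal{K})$ reduces to $h(\tau_i)=h(\tau_j)$. This is exactly the condition just obtained, so the equivalence follows. Alternatively, one can cite Lemma~\ref{lem:line-operator} directly: for $i\neq j$ we have $M_{\mathrm{out}}(i,j)=A_{\mathrm{out}}(i,j)$, because subtracting $I_m$ changes only diagonal entries, and $A_{\mathrm{out}}$ is the adjacency matrix of $L_{\mathrm{out}}(\mathcal{K})$.

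The in-line statement follows by the identical argument with $H^{(t)}$ and tails. There is no real obstacle here. The only point that needs care is making explicit that each column of an incidence matrix contains exactly one $1$, which is a consequence of $h$ and $t$ being functions on $T$. I will also state clearly that the restriction $i\neq j$ is what removes the diagonal. This matters because $M_{\mathrm{out}}(i,i)=1>0$ even though the line digraph has no loops.
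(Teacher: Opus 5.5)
Your proposal is correct and follows essentially the same route as the paper: expand $M_{\mathrm{out}}(i,j)=\sum_{k} H^{(h)}_{ki}H^{(h)}_{kj}$, use that each triple has a unique head so the sum is $1$ exactly when $h(\tau_i)=h(\tau_j)$ and $0$ otherwise, and repeat with $H^{(t)}$ for the in-line case. Your explicit remark that the hypothesis $i\neq j$ is what discards the diagonal entries $M_{\mathrm{out}}(i,i)=1$ is a small clarification that the paper leaves implicit.
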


\begin{proof}
By definition,
\[
M_{\mathrm{out}}(i,j)=\sum_{k=1}^n
H^{(h)}_{ki}H^{(h)}_{kj}.
\]
This sum counts entities that occur as heads of both triples.
Since each triple has a unique head, the value is $1$ when the
triples share a head and $0$ otherwise.
The argument for $M_{\mathrm{in}}$ is identical using
$H^{(t)}$.
\end{proof}

\subsection{Structural decomposition}

Define relations on triples by
\[
\tau_i\sim_h\tau_j \iff h(\tau_i)=h(\tau_j),\qquad
\tau_i\sim_t\tau_j \iff t(\tau_i)=t(\tau_j).
\]

\begin{theorem}\label{thm:scc}
The strongly connected components of
$L_{\mathrm{out}}(\mathcal{K})$
correspond exactly to the equivalence classes of $\sim_h$,
and the strongly connected components of
$L_{\mathrm{in}}(\mathcal{K})$
correspond exactly to the equivalence classes of $\sim_t$.
\end{theorem}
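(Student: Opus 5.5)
The plan is to show that the vertex set of each strongly connected component of $L_{\mathrm{out}}(\mathcal{K})$ is exactly one equivalence class
\[
T_e=\{\tau\in T : h(\tau)=e\}
\]
of $\sim_h$, for some $e\in E_h$. The in-line case will then follow verbatim, with $h$ replaced by $t$ and $\sim_h$ by $\sim_t$. First I record that $\sim_h$ is an equivalence relation. It is the kernel of the function $h:T\to E$, so reflexivity, symmetry and transitivity are inherited from equality in $E$. Its classes are precisely the nonempty fibres $T_e$.

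The argument then has two inclusions.

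\textbf{Each class lies in one component.} Take $\tau_i\sim_h\tau_j$ with $i\neq j$. By definition of $L_{\mathrm{out}}(\mathcal{K})$ there are edges $\tau_i\to\tau_j$ and $\tau_j\to\tau_i$, because the adjacency condition $h(\tau_i)=h(\tau_j)$ is symmetric. So any two distinct members of $T_e$ are mutually reachable by paths of length one. Equivalently, the induced subdigraph on $T_e$ is the complete digraph with adjacency matrix $J_{m_e}-I_{m_e}$, in line with Lemma~\ref{lem:line-operator}. For $i=j$, mutual reachability holds trivially via the empty path, which covers singleton classes.

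\textbf{No component straddles two classes.} Every edge $\tau_i\to\tau_j$ of $L_{\mathrm{out}}(\mathcal{K})$ satisfies $h(\tau_i)=h(\tau_j)$, so $h$ is constant along every directed path. Hence if $\tau_j$ is reachable from $\tau_i$ at all, then $\tau_i\sim_h\tau_j$. In particular, two triples in the same strongly connected component have the same head.

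Combining the two inclusions, the reachability relation of $L_{\mathrm{out}}(\mathcal{K})$ coincides with $\sim_h$. Since strongly connected components are by definition the classes of mutual reachability, they are exactly the classes of $\sim_h$. As a by-product, $L_{\mathrm{out}}(\mathcal{K})$ is a disjoint union of complete digraphs on the sets $T_e$, with no edges between components.

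I do not expect a real obstacle. The only point needing care is the convention for singleton classes: a triple whose head is shared with no other triple is an isolated vertex, and it forms a strongly connected component only under the standard convention that every vertex reaches itself by the trivial path. I will state this convention explicitly.
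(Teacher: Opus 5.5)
Your proposal is correct and follows the same route as the paper: the paper also shows that each head-fibre $C_e$ induces a complete digraph and that no edge joins triples with different heads, then repeats the argument with tails. Your observation that $h$ is constant along directed paths, and your convention that singleton classes are components via the trivial path, make explicit two points the paper leaves implicit.
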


\begin{proof}
Let $C_e=\{\tau\in T:h(\tau)=e\}$.
For any $\tau_i,\tau_j\in C_e$ we have both edges
$i\to j$ and $j\to i$, so the induced subgraph is
complete. If two triples have different heads,
no edge can exist between them. Thus each component
coincides with one such class. The same argument
applies to the in-line graph using tails.
\end{proof}

\begin{proposition}\label{prop:component-partition}
Let $\mathcal{K}=(E,P,T)$ be a knowledge graph. The vertex set $T$
of the out-line digraph $L_{\mathrm{out}}(\mathcal{K})$ admits the
partition
\[
T=\bigsqcup_{e\in E}T_e,
\qquad
T_e=\{\tau\in T:h(\tau)=e\}.
\]
Each set $T_e$ induces a complete directed subgraph of
$L_{\mathrm{out}}(\mathcal{K})$.
\end{proposition}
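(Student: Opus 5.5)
The plan has two parts: first verify that the sets $T_e$ form a partition of $T$, then check that each $T_e$ induces a complete digraph.

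\emph{Partition.} I will use that the head map $h:T\to E$ is a genuine function: every triple $\tau=(h,p,t)$ has exactly one head. Two facts then follow directly.
\begin{itemize}
\item Every $\tau\in T$ lies in $T_{h(\tau)}$, so $\bigcup_{e\in E}T_e=T$.
\item If $\tau\in T_e\cap T_{e'}$, then $e=h(\tau)=e'$, so the sets are pairwise disjoint.
\end{itemize}
Hence the $T_e$ are the fibres $h^{-1}(e)$ of $h$, which is the same as the equivalence classes of $\sim_h$. Some fibres may be empty, namely when $e\notin E_h$. I will note that these empty blocks are harmless in a disjoint union, or equivalently that one may index over $E_h$ only.

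\emph{Completeness.} Take distinct $\tau_i,\tau_j\in T_e$. Then $h(\tau_i)=e=h(\tau_j)$ and $i\neq j$, so the definition of $L_{\mathrm{out}}(\mathcal{K})$ gives both edges $\tau_i\to\tau_j$ and $\tau_j\to\tau_i$. The definition excludes loops, so the induced subgraph on $T_e$ is exactly the complete digraph on $m_e=|T_e|$ vertices. In matrix terms, its adjacency block is $J_{m_e}-I_{m_e}$, which is consistent with Lemma~\ref{lem:line-operator}.

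To match how the result is used in Proposition~\ref{prop:spectrum}, I will also record a converse. If $h(\tau_i)\neq h(\tau_j)$, there is no edge between $\tau_i$ and $\tau_j$. So $L_{\mathrm{out}}(\mathcal{K})$ is the disjoint union of these complete digraphs, and after ordering $T$ by blocks its adjacency matrix is block diagonal.

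There is no real obstacle. The only points needing care are that the partition is well defined, which relies on each triple having a unique head, and the treatment of empty $T_e$.
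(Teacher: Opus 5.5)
Your proof is correct and is essentially the paper's argument: the paper says only that the result follows from Theorem~\ref{thm:scc}, whose proof makes the same observations you do (distinct triples with the same head are joined in both directions, triples with different heads are never adjacent, and the head-fibres are the $\sim_h$-classes). Arguing directly from the definition as you do is slightly cleaner, since the statement of Theorem~\ref{thm:scc} alone (strongly connected components $=$ $\sim_h$-classes) does not give completeness of the induced subgraphs; that fact appears only inside its proof.
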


\begin{proof}
This follows immediately from Theorem~\ref{thm:scc}.
\end{proof}

\begin{corollary}\label{cor:component-degree}
Let $\tau\in T$ be a triple whose head entity occurs in $m$ triples of
$\mathcal{K}$. Then the corresponding vertex in
$L_{\mathrm{out}}(\mathcal{K})$ has out-degree and in-degree equal to
$m-1$.
\end{corollary}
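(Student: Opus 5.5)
The plan is to read off the degrees directly from the component structure given by Proposition~\ref{prop:component-partition}. Let $e=h(\tau)$, so that $\tau\in T_e$ and, by hypothesis, $|T_e|=m$. (Here $m$ denotes the local count $m_e$, not the total number of triples.) I will show that the out-neighbours of $\tau$ are exactly the elements of $T_e\setminus\{\tau\}$, and that the same set is its set of in-neighbours.

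First, the edges leaving $\tau$. By the definition of $L_{\mathrm{out}}(\mathcal{K})$, there is an edge $\tau\to\tau'$ precisely when $h(\tau')=h(\tau)=e$ and $\tau'\neq\tau$. So the out-neighbourhood is $T_e\setminus\{\tau\}$. Proposition~\ref{prop:component-partition} says $T_e$ induces a complete directed subgraph, so every such $\tau'$ is indeed joined to $\tau$. Theorem~\ref{thm:scc} (or the definition directly) says no edge leaves $T_e$. Hence the out-degree is $m-1$.

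For the in-degree, I note that the defining condition $h(\tau_i)=h(\tau_j)$, $i\neq j$, is symmetric in $i$ and $j$. Therefore $\tau'\to\tau$ is an edge iff $\tau\to\tau'$ is, and the in-neighbourhood is the same set $T_e\setminus\{\tau\}$, giving in-degree $m-1$ as well.

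As a cross-check one can use Lemma~\ref{lem:line-operator}. The row sum of $A_{\mathrm{out}}=(H^{(h)})^\top H^{(h)}-I$ at index $\tau$ equals $\sum_j M_{\mathrm{out}}(\tau,j)-1$, which is $m-1$, and symmetry of the matrix (Proposition~\ref{prop:incidence-symmetry}) makes the column sum equal to it.

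There is no real obstacle here. The only point needing care is the notational clash, since $m$ in the corollary denotes $|T_e|$ rather than $|T|$, and the proof should make that explicit.
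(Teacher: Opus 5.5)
Your proposal is correct and follows essentially the same route as the paper: both read the degrees off the complete directed subgraph on the shared-head class $T_e$ (Proposition~\ref{prop:component-partition}). Your version is somewhat more careful than the paper's one-line argument, because you explicitly rule out edges leaving $T_e$, obtain the in-degree from the symmetry of the defining condition, and point out that $m$ here means $|T_e|$ rather than $|T|$.
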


\begin{proof}
All triples sharing the same head entity form a complete directed
subgraph in $L_{\mathrm{out}}(\mathcal{K})$. If $m$ triples share this
head, each triple is adjacent to the other $m-1$ triples, giving the
stated degree.
\end{proof}

\begin{remark}[Categorical interpretation of line digraphs]
The decomposition described above has a natural categorical
interpretation in terms of the free category $\mathcal{C}(K)$ generated
by the knowledge graph. The triples of $K$ correspond to generating
morphisms of $\mathcal{C}(K)$, and the relations $\sim_h$ and $\sim_t$
group together morphisms that share the same domain or codomain.
Consequently, the strongly connected components of the line knowledge
digraphs reflect the domain and codomain fibres of the generating
morphisms in $\mathcal{C}(K)$. In this way the combinatorial structure
of the line digraphs mirrors the categorical organisation of morphisms
in the free category.
\end{remark}

\begin{proposition}\label{prop:rank-incidence}
Let $\mathcal{K}=(E,P,T)$ be a knowledge graph with $|E|=n$ entities
and $|T|=m$ triples, and let $H^{(h)},H^{(t)}\in\{0,1\}^{n\times m}$
be the head and tail incidence matrices.

Then each column of $H^{(h)}$ and $H^{(t)}$ contains exactly one
non-zero entry, and consequently

\[
\operatorname{rank}(H^{(h)}) \le n,
\qquad
\operatorname{rank}(H^{(t)}) \le n.
\]

Moreover, if every entity appears as the head of at least one triple,
then

\[
\operatorname{rank}(H^{(h)}) = n .
\]
\end{proposition}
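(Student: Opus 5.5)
The plan is to argue directly from the definition of the incidence matrices, without any of the line-digraph machinery. The first claim is definitional. By Definition~\ref{def:kg}, each triple $\tau_j=(h,p,t)$ has a unique head $h(\tau_j)\in E$ and a unique tail $t(\tau_j)\in E$. So column $j$ of $H^{(h)}$ has a $1$ exactly in the row indexed by $h(\tau_j)$ and zeros elsewhere, and the same holds for $H^{(t)}$ with the tail. I will record this as: every column of $H^{(h)}$ equals the standard basis vector $\mathbf{e}_{i}$ of $\mathbb{R}^n$, where $e_i=h(\tau_j)$. This is the same observation already used in the proof of Proposition~\ref{prop:incidence-symmetry}.

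The upper bounds $\operatorname{rank}(H^{(h)})\le n$ and $\operatorname{rank}(H^{(t)})\le n$ then follow from the general fact that the rank of an $n\times m$ matrix is at most $\min(n,m)\le n$. I would add that the column description gives the sharper identity
\[
\operatorname{rank}(H^{(h)})=\dim\operatorname{span}\{\mathbf{e}_i : e_i\in E_h\}=|E_h|,
\]
with $E_h$ as in Proposition~\ref{prop:spectrum}. The reason is that the column space is spanned by the distinct standard basis vectors that actually occur as columns, and distinct standard basis vectors are linearly independent. The analogous statement holds for $H^{(t)}$ with the set of tail entities.

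The \emph{moreover} clause is immediate from this identity. If every entity occurs as a head, then $E_h=E$, so every $\mathbf{e}_1,\dots,\mathbf{e}_n$ appears among the columns of $H^{(h)}$. The column space is therefore all of $\mathbb{R}^n$, and the rank is $n$. Equivalently, one can choose for each $e_i$ a triple $\tau_{j_i}$ with head $e_i$; the $n\times n$ submatrix on the columns $j_1,\dots,j_n$ is the identity matrix, which exhibits a nonsingular $n\times n$ minor.

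There is no real obstacle here. The only point needing care is that a triple has exactly one head, which the definition $T\subseteq E\times P\times E$ guarantees. I would also note that the hypothesis forces $m\ge n$, which is consistent with the rank being at most $\min(n,m)$.
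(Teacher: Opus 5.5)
Your proposal is correct and follows essentially the same route as the paper: every column of $H^{(h)}$ is a standard basis vector of $\mathbb{R}^n$, which gives $\operatorname{rank}\le n$, and when every entity occurs as a head all of $\mathbf{e}_1,\dots,\mathbf{e}_n$ appear among the columns, so the rank is $n$. Your sharper identity $\operatorname{rank}(H^{(h)})=|E_h|$ is a useful addition. Your phrasing is also more accurate than the paper's: it is the \emph{column} space that equals $\mathbb{R}^n$, whereas the paper's proof says the \emph{rows} span $\mathbb{R}^n$, and the rows actually lie in $\mathbb{R}^m$.
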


\begin{proof}
Each column of $H^{(h)}$ contains a single entry equal to $1$
corresponding to the head of the triple and zeros elsewhere.
Hence the column space of $H^{(h)}$ lies in $\mathbb{R}^n$, implying
$\operatorname{rank}(H^{(h)})\le n$.

If every entity occurs as the head of at least one triple, then for
each entity $e_i$ there exists a column whose non-zero entry occurs in
row $i$. The rows of $H^{(h)}$ therefore span $\mathbb{R}^n$, giving
$\operatorname{rank}(H^{(h)})=n$.

The same argument applies to $H^{(t)}$.
\end{proof}

\section{Knowledge Graphs as Free Categories}\label{sec:freecat}

\begin{definition}[Free category of a knowledge graph]\label{def:freecat}
The \emph{free category} $\mathcal{C}(K)$ generated by $K$ has
\begin{itemize}
\item objects: the entities $E$;
\item generating morphisms: the triples $T$, with $\tau = (h,p,t)$
      regarded as an arrow $h \to t$;
\item morphisms: finite composable paths of triples, with composition
      given by concatenation;
\item identities: empty paths at each entity.
\end{itemize}
\end{definition}

This construction is the standard free category generated by a directed
graph (cf.~\cite[Sec.~II.7]{maclane1998cwm}; \cite[Ch.~4]{awodey2010}).

\begin{theorem}[Universal property of the free category]
\label{thm:freecat-universal}
Let $K=(E,P,T)$ be a knowledge graph and let $\mathcal{C}(K)$ be the
free category generated by $K$. Let $\mathcal{D}$ be any category.

Suppose we are given a function assigning to each entity
$e\in E$ an object $F(e)$ of $\mathcal{D}$ and to each triple
$\tau=(h,p,t)\in T$ a morphism
\[
F(\tau):F(h)\to F(t)
\]
in $\mathcal{D}$.

Then there exists a unique functor
\[
\widetilde{F}:\mathcal{C}(K)\to\mathcal{D}
\]
such that

\[
\widetilde{F}(e)=F(e),\qquad
\widetilde{F}(\tau)=F(\tau)
\]

\noindent for all entities $e\in E$ and triples $\tau\in T$.
\end{theorem}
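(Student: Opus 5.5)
We define $\widetilde{F}$ explicitly on paths, check that it is a functor, and then show that any functor agreeing with $F$ on entities and triples must coincide with it.

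By Definition~\ref{def:freecat}, a morphism $e\to e'$ in $\mathcal{C}(K)$ is one of two things:
\begin{itemize}
\item an empty path $\mathrm{id}_e$, when $e=e'$;
\item a finite sequence $(\tau_1,\dots,\tau_k)$ of triples with $k\ge 1$, where $h(\tau_1)=e$, $t(\tau_k)=e'$ and $t(\tau_i)=h(\tau_{i+1})$ for each $i$.
\end{itemize}
We set $\widetilde{F}(e)=F(e)$ on objects and $\widetilde{F}(\mathrm{id}_e)=\mathrm{id}_{F(e)}$. For $k\ge 1$ we set
\[
\widetilde{F}(\tau_1,\dots,\tau_k)=F(\tau_k)\circ\cdots\circ F(\tau_1).
\]
This composite is defined in $\mathcal{D}$, because $F(\tau_i):F(h(\tau_i))\to F(t(\tau_i))$ and the composability condition $t(\tau_i)=h(\tau_{i+1})$ makes consecutive codomains and domains match. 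A path of length one is identified with the generating triple $\tau$, so $\widetilde{F}(\tau)=F(\tau)$ holds by construction.

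Next we verify functoriality. Identities are preserved by definition. For composition, take composable paths $\sigma=(\tau_1,\dots,\tau_k)$ and $\rho=(\tau_{k+1},\dots,\tau_{k+l})$. Their composite $\rho\circ\sigma$ is the concatenation $(\tau_1,\dots,\tau_{k+l})$. Associativity of composition in $\mathcal{D}$ then gives
\[
\widetilde{F}(\rho\circ\sigma)=\widetilde{F}(\rho)\circ\widetilde{F}(\sigma),
\]
since both sides equal the same iterated composite $F(\tau_{k+l})\circ\cdots\circ F(\tau_1)$. The degenerate cases $k=0$ or $l=0$ follow from the unit laws in $\mathcal{D}$.

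For uniqueness, let $G:\mathcal{C}(K)\to\mathcal{D}$ be any functor with $G(e)=F(e)$ and $G(\tau)=F(\tau)$. Functoriality forces $G(\mathrm{id}_e)=\mathrm{id}_{F(e)}$. Every non-empty path factors as the composite of its generating triples, $(\tau_1,\dots,\tau_k)=\tau_k\circ\cdots\circ\tau_1$, so preservation of composition gives
\[
G(\tau_1,\dots,\tau_k)=G(\tau_k)\circ\cdots\circ G(\tau_1)=F(\tau_k)\circ\cdots\circ F(\tau_1)=\widetilde{F}(\tau_1,\dots,\tau_k).
\]
Hence $G=\widetilde{F}$.

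The only step that needs care is that $\widetilde{F}$ is well defined. This holds because morphisms in the free category are exactly the formal sequences, with no extra identifications: each morphism has a unique decomposition into generators, and concatenation is strictly associative with the empty paths as units. So the formula above assigns exactly one value to each morphism, and composition in $\mathcal{D}$ is well defined because $\mathcal{D}$ is associative. One formal point remains: the assignment $F$ must send each triple $\tau$ to an arrow between the images of its own head and tail entities, $F(\tau):F(h(\tau))\to F(t(\tau))$. This is exactly what the hypothesis of Theorem~\ref{thm:freecat-universal} provides.
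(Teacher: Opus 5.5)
Your proof is correct and follows the same route as the paper. Both define $\widetilde{F}$ on a path as the composite of the images of its triples, verify functoriality via concatenation and associativity in $\mathcal{D}$, and derive uniqueness from the fact that every morphism is a composite of generating triples; yours additionally makes explicit the empty-path case, the composability of the composite, and well-definedness, which the paper leaves implicit.
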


\begin{proof}
Every morphism in $\mathcal{C}(K)$ is a finite composable path of
triples
\[
\tau_k\circ\dots\circ\tau_1 .
\]

Define
\[
\widetilde{F}(\tau_k\circ\dots\circ\tau_1)
=
F(\tau_k)\circ\dots\circ F(\tau_1).
\]

This definition respects identities and composition because
composition in $\mathcal{C}(K)$ is given by concatenation of paths.
Hence $\widetilde{F}$ is a functor extending the assignment on
generators.

Uniqueness follows because every morphism of $\mathcal{C}(K)$ is
generated by the triples of $T$, so any functor agreeing with $F$ on
entities and triples must coincide with $\widetilde{F}$.
\end{proof}

\begin{proposition}\label{prop:domcod}
Let $K=(E,P,T)$ be a knowledge graph and let $\mathcal{C}(K)$ be its
free category. For every triple $\tau=(h,p,t)\in T$, the corresponding
generating morphism in $\mathcal{C}(K)$ has
\[
\mathrm{dom}(\tau)=h,
\qquad
\mathrm{cod}(\tau)=t .
\]
Moreover, the maps
\[
h,t : T \to E
\]
coincide with the domain and codomain maps of the generating morphisms
of $\mathcal{C}(K)$.
\end{proposition}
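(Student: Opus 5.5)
The plan is to read the claim off Definition~\ref{def:freecat}, making the conventions for paths explicit so that domain and codomain are defined for every morphism of $\mathcal{C}(K)$ and not only for generators. First I would fix the description of morphisms. A morphism of $\mathcal{C}(K)$ from $a$ to $b$ is either the empty path $\mathrm{id}_a$ (when $a=b$), or a finite sequence $(\tau_1,\dots,\tau_k)$ of triples with $t(\tau_i)=h(\tau_{i+1})$ for $1\le i<k$. Writing $h,t:T\to E$ for the head and tail projections $(h,p,t)\mapsto h$ and $(h,p,t)\mapsto t$, I would define
\[
\mathrm{dom}(\tau_1,\dots,\tau_k)=h(\tau_1),\qquad \mathrm{cod}(\tau_1,\dots,\tau_k)=t(\tau_k),
\]
and set $\mathrm{dom}(\mathrm{id}_a)=\mathrm{cod}(\mathrm{id}_a)=a$.

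Next I would check that these assignments are the ones making $\mathcal{C}(K)$ a category. Concatenation of $(\tau_1,\dots,\tau_k)$ followed by $(\sigma_1,\dots,\sigma_l)$ is defined exactly when $t(\tau_k)=h(\sigma_1)$, that is, when the codomain of the first path equals the domain of the second. The resulting path then has domain $h(\tau_1)$ and codomain $t(\sigma_l)$. Empty paths act as identities at their entity. This is the standard free-category construction cited after Definition~\ref{def:freecat}, so I would invoke it rather than re-verify associativity.

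Then I would specialise to a generating morphism, namely the length-one path $(\tau)$ with $\tau=(h,p,t)$. The formulas give $\mathrm{dom}(\tau)=h(\tau)=h$ and $\mathrm{cod}(\tau)=t(\tau)=t$, which is the first assertion. It also agrees with the stipulation in Definition~\ref{def:freecat} that $\tau$ is regarded as an arrow $h\to t$. For the second assertion, I would observe that the inclusion $T\hookrightarrow \mathrm{Mor}(\mathcal{C}(K))$, $\tau\mapsto(\tau)$, is injective, since distinct triples give distinct length-one paths. Composing $\mathrm{dom},\mathrm{cod}$ with this inclusion yields precisely $h$ and $t$ as functions $T\to E$, which gives the equality of maps.

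No step is a genuine obstacle. The only care needed is notational: the letters $h,t$ denote both the components of a particular triple and the global projection maps. I would state that convention explicitly so the identity $\mathrm{dom}\circ\iota=h$ is read as an equality of functions, where $\iota$ is the inclusion of generators.
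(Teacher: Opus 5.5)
Your proposal is correct and follows the same route as the paper. The paper reads the domain and codomain of each generator directly off Definition~\ref{def:freecat}, where each triple $(h,p,t)$ is declared to be an arrow $h\to t$, and notes that the head and tail maps $h,t:T\to E$ record exactly this data. Your explicit $\mathrm{dom}/\mathrm{cod}$ conventions for arbitrary paths and the injectivity of $T\hookrightarrow\mathrm{Mor}(\mathcal{C}(K))$ are harmless extra detail that the statement does not need.
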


\begin{proof}
By Definition~\ref{def:freecat}, each triple $\tau=(h,p,t)$ is regarded
as an arrow $h\to t$ in $\mathcal{C}(K)$. Therefore the domain and
codomain of the corresponding generating morphism are exactly $h$ and
$t$. Since the functions $h,t:T\to E$ assign to each triple its head
and tail entities, they coincide with the domain and codomain maps of
the generating morphisms.
\end{proof}

The out-line and in-line digraphs record co-domain and co-codomain
adjacencies among the generating morphisms: $\tau_i$ and $\tau_j$ are
adjacent in $L_{\mathrm{out}}(K)$ iff they share the same domain, and
in $L_{\mathrm{in}}(K)$ iff they share the same codomain.

\begin{proposition}[Domain and codomain fibres]
\label{prop:fibre-structure}

Let $K=(E,P,T)$ be a knowledge graph and let $\mathcal{C}(K)$ be the
free category generated by $K$.

For each entity $e\in E$ define the sets

\[
\mathrm{Hom}_h(e)=\{\tau\in T : h(\tau)=e\},
\qquad
\mathrm{Hom}_t(e)=\{\tau\in T : t(\tau)=e\}.
\]

Then the generating morphisms of $\mathcal{C}(K)$ decompose into
domain and codomain fibres

\[
T=\bigsqcup_{e\in E}\mathrm{Hom}_h(e)
=
\bigsqcup_{e\in E}\mathrm{Hom}_t(e).
\]

Moreover, the connected components of the out-line and in-line
knowledge digraphs correspond exactly to these domain and codomain
fibres.
\end{proposition}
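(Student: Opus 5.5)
The first claim follows because $h$ and $t$ are total functions $T\to E$. For every $\tau\in T$ there is exactly one $e\in E$ with $h(\tau)=e$, namely $e=h(\tau)$. So the fibres $\mathrm{Hom}_h(e)=h^{-1}(e)$, $e\in E$, are pairwise disjoint and together cover $T$. This gives $T=\bigsqcup_{e\in E}\mathrm{Hom}_h(e)$. Some fibres may be empty, for entities that are never heads; this does not affect the disjoint union. The same argument with $t$ gives $T=\bigsqcup_{e\in E}\mathrm{Hom}_t(e)$. By Proposition~\ref{prop:domcod}, $h$ and $t$ are the domain and codomain maps of the generating morphisms of $\mathcal{C}(K)$. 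Hence these are precisely the domain and codomain fibres of the generators.

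For the second claim, I note that $\mathrm{Hom}_h(e)$ is exactly the set $T_e$ of Proposition~\ref{prop:component-partition}. It is also the $\sim_h$-class of any triple it contains. Theorem~\ref{thm:scc} identifies the strongly connected components of $L_{\mathrm{out}}(K)$ with the $\sim_h$-classes. To pass from strong to weak (ordinary) connectivity, I observe that the edge relation of $L_{\mathrm{out}}(K)$ is symmetric: $\tau_i\to\tau_j$ is an edge iff $\tau_j\to\tau_i$ is. Equivalently, $A_{\mathrm{out}}$ is symmetric by Proposition~\ref{prop:incidence-symmetry} and Lemma~\ref{lem:line-operator}. Therefore weakly and strongly connected components coincide.

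Directly, I argue in two directions. First, any two distinct triples in the same nonempty fibre $\mathrm{Hom}_h(e)$ are joined by an edge, so each nonempty fibre is connected. Second, no edge joins triples with different heads, so no path leaves a fibre. Hence the connected components of $L_{\mathrm{out}}(K)$ are exactly the nonempty fibres $\mathrm{Hom}_h(e)$. The in-line case is identical, using $\sim_t$ and $\mathrm{Hom}_t(e)$.

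The only real care point is the bookkeeping. "Correspond exactly" must be read as a bijection between connected components and the \emph{nonempty} fibres, that is, fibres over $e\in E_h$ (respectively over entities occurring as tails). Empty fibres contribute nothing to the partition and are not components.
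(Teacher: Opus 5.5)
Your proposal is correct and follows essentially the same route as the paper: uniqueness of head and tail gives the two partitions of $T$, and distinct triples are adjacent in $L_{\mathrm{out}}(K)$ (resp.\ $L_{\mathrm{in}}(K)$) exactly when they share a head (resp.\ tail), which identifies the components with the fibres. Your additional points, that the edge relation is symmetric so weak and strong components agree and that only the nonempty fibres are components, are precisions the paper's proof leaves implicit.
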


\begin{proof}
By definition each triple $\tau=(h,p,t)$ has a unique head and tail.
Therefore the sets $\mathrm{Hom}_h(e)$ and $\mathrm{Hom}_t(e)$ form
partitions of the triple set $T$.

From the construction of the line knowledge digraphs,
two triples are adjacent in $L_{\mathrm{out}}(K)$ precisely when
they share the same head entity, and adjacent in $L_{\mathrm{in}}(K)$
when they share the same tail entity. Hence the connected components
of the respective line digraphs coincide with the fibres
$\mathrm{Hom}_h(e)$ and $\mathrm{Hom}_t(e)$.
\end{proof}

The categorical representation of knowledge graphs makes it possible to
study transformations between graphs using categorical morphisms.
Mappings between knowledge graphs that preserve entities, predicates,
and triples naturally induce structure-preserving maps between their
associated free categories. Such constructions align with the broader
categorical treatment of relational data and database transformations
developed in functorial data migration \cite{spivak2012,spivak2014category}.

Functorial data migration interprets database schemas as categories
and database instances as functors into the category of sets.
Schema mappings are then represented by functors between categories,
which induce adjoint data-migration operations between instance
categories \cite{spivak2012}. In the present setting, a knowledge graph
generates the free category $\mathcal{C}(K)$, which may be regarded as
a categorical schema whose morphisms correspond to relational paths.
Sheaves on $(\mathcal{C}(K),J)$ can therefore be interpreted as
structured instances of relational data satisfying compatibility
conditions imposed by the topology $J$.

From this viewpoint the geometric morphisms between the associated
topoi describe transformations between semantic environments attached
to the same relational schema. The topos
$\mathrm{Sh}(\mathcal{C}(K),J)$ thus provides a semantic extension of
the categorical data model in which contextual interpretation is
encoded through the Grothendieck topology.

\begin{definition}[Knowledge graph homomorphism]
\label{def:kg-hom}
Let $K=(E,P,T)$ and $K'=(E',P',T')$ be knowledge graphs.
A \emph{knowledge graph homomorphism}
\[
f:K\to K'
\]
consists of maps

\[
f_E:E\to E', \qquad
f_P:P\to P'
\]

\noindent such that for every triple
\[
(h,p,t)\in T
\]
the triple

\[
(f_E(h),\,f_P(p),\,f_E(t))
\]

\noindent belongs to $T'$.

In other words, triples are mapped to triples while preserving the
head entity, predicate label, and tail entity.
\end{definition}

\begin{proposition}
\label{prop:kg-category}
Knowledge graphs together with knowledge graph homomorphisms form a
category, denoted $\mathbf{KG}$.
\end{proposition}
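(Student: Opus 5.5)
The plan is to verify the category axioms directly: objects are knowledge graphs $K=(E,P,T)$, morphisms are homomorphisms $f=(f_E,f_P)$ in the sense of Definition~\ref{def:kg-hom}, and composition is componentwise. Since a homomorphism is determined by its pair of underlying functions, two homomorphisms are equal exactly when both components agree. This is the only notion of equality needed, and it reduces everything to facts about composition of functions between sets.

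First I will define identities. For $K=(E,P,T)$ set $\mathrm{id}_K=(\mathrm{id}_E,\mathrm{id}_P)$. For $(h,p,t)\in T$ the image triple is $(h,p,t)$ itself, which lies in $T$. So $\mathrm{id}_K$ is a homomorphism $K\to K$.

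Next I will define composition. Given $f:K\to K'$ and $g:K'\to K''$, set
\[
g\circ f=(g_E\circ f_E,\; g_P\circ f_P).
\]
Closure is the only step with content. For $(h,p,t)\in T$, the hypothesis on $f$ gives $(f_E(h),f_P(p),f_E(t))\in T'$. Applying the hypothesis on $g$ to this triple of $T'$ gives
\[
\bigl(g_E(f_E(h)),\,g_P(f_P(p)),\,g_E(f_E(t))\bigr)\in T''.
\]
This is exactly the condition for $g\circ f$ to be a homomorphism $K\to K''$.

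Finally, associativity and the unit laws hold because they hold componentwise for composition of functions in $\mathbf{Set}$: $(h\circ g)\circ f$ and $h\circ(g\circ f)$ have equal $E$- and $P$-components, and likewise $f\circ\mathrm{id}_K=f=\mathrm{id}_{K'}\circ f$. There is no real obstacle. The only points needing care are the closure of composition under the triple-preservation condition, and the convention that a morphism carries its domain and codomain graphs, so that hom-sets for distinct pairs of objects are disjoint.
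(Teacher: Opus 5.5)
Your proposal is correct and follows the same route as the paper: identities $(\mathrm{id}_E,\mathrm{id}_P)$, componentwise composition, and the category laws inherited from composition of functions. You also spell out the triple-closure step for $g\circ f$, which the paper calls immediate, and you note that morphisms must carry their source and target graphs so that hom-sets are disjoint; both are improvements.
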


\begin{proof}
The identity morphism on a knowledge graph $K=(E,P,T)$ is given by the
identity maps on $E$ and $P$. Composition of two homomorphisms
\[
K\xrightarrow{f}K'\xrightarrow{g}K''
\]
is defined componentwise by composition of the entity and predicate
maps. Preservation of triples follows immediately from the definition.
Associativity and identity laws hold because they hold for functions.
\end{proof}

\subsection{Functorial constructions}\label{subsec:functors}

Let $\mathbf{KG}$ denote the category of knowledge graphs, whose
morphisms $f : K \to K'$ consist of maps $f_E : E \to E'$ and
$f_T : T \to T'$ preserving heads, tails, and predicates.
Such categorical interpretations of relational data are closely related
to functorial approaches to data migration
\cite{spivak2012,spivak2014category}. Let $\mathbf{Graph}$ be the category of directed graphs and graph
homomorphisms.

\begin{theorem}[Line digraph functors]\label{thm:functor}
The assignments $L_{\mathrm{out}}, L_{\mathrm{in}} : \mathbf{KG} \to
\mathbf{Graph}$ extend to functors.
\end{theorem}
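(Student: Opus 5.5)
We define the action on morphisms directly and then check the two functor axioms. We work with the description of $\mathbf{KG}$ in \S\ref{subsec:functors}: a morphism $f:K\to K'$ comes with a triple map $f_T:T\to T'$ satisfying $h'(f_T(\tau))=f_E(h(\tau))$ and $t'(f_T(\tau))=f_E(t(\tau))$. If one instead uses Definition~\ref{def:kg-hom}, set $f_T(h,p,t)=(f_E(h),f_P(p),f_E(t))$. This lies in $T'$ by hypothesis and satisfies the same head/tail equations.

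On objects, $L_{\mathrm{out}}(K)$ is the out-line digraph with vertex set $T$. On morphisms, put $L_{\mathrm{out}}(f)=f_T$ on vertices. For an edge $\tau_i\to\tau_j$ we have $h(\tau_i)=h(\tau_j)$, hence
\[
h'(f_T(\tau_i))=f_E(h(\tau_i))=f_E(h(\tau_j))=h'(f_T(\tau_j)),
\]
so $f_T(\tau_i)\sim_{h}f_T(\tau_j)$ in $K'$.

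The main obstacle is the case $f_T(\tau_i)=f_T(\tau_j)$, which happens when $f_T$ is not injective. The image pair is then not an edge of $L_{\mathrm{out}}(K')$, because the definition requires $i\neq j$ and so the out-line digraph has no loops. The plan is to take $\mathbf{Graph}$ to be the category of reflexive directed graphs, or equivalently to allow edges to be sent to vertices (collapse to an identity loop). With that convention, an edge is sent either to the edge $f_T(\tau_i)\to f_T(\tau_j)$ or to the degenerate loop at the common image. An equivalent alternative is to use the relation $\sim_h$ itself, whose adjacency matrix is $M_{\mathrm{out}}$ rather than $A_{\mathrm{out}}$ (Lemma~\ref{lem:line-operator}). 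This relation is reflexive and is plainly preserved by $f_T$. If one insists on loopless graphs, one instead restricts to homomorphisms that are injective on each fibre $\mathrm{Hom}_h(e)$ of Proposition~\ref{prop:fibre-structure}, and the same argument goes through.

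Functoriality is then immediate, since the vertex map is literally $f_T$. The identity homomorphism has $f_T=\mathrm{id}_T$, so it goes to the identity graph map. For composites, $(g\circ f)_T=g_T\circ f_T$ because the triple map of $g\circ f$ is determined componentwise by $g_E\circ f_E$ and $g_P\circ f_P$. A graph homomorphism of simple digraphs is determined by its vertex map, so $L_{\mathrm{out}}(g\circ f)=L_{\mathrm{out}}(g)\circ L_{\mathrm{out}}(f)$.

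The case of $L_{\mathrm{in}}$ is identical, with $t$ in place of $h$ and $\sim_t$ in place of $\sim_h$.
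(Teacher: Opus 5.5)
Your proof takes the same route as the paper's. The paper also sets $L_{\mathrm{out}}(f)=f_T$ on vertices, transports edges via $h(f_T(\tau_i))=f_E(h(\tau_i))=f_E(h(\tau_j))=h(f_T(\tau_j))$, and gets identities and composition from the fact that the vertex map is just $f_T$.

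The one substantive difference is in your favour. The paper concludes that $f_T(\tau_i)\to f_T(\tau_j)$ is an edge of $L_{\mathrm{out}}(K')$ without considering the case $f_T(\tau_i)=f_T(\tau_j)$. This is not just a gap in its proof: the statement fails as literally posed. Take $K$ with triples $(a,p,b)$ and $(a,q,b)$, and $K'$ with the single triple $(a,r,b)$, with $f_E=\mathrm{id}$ and $f_P(p)=f_P(q)=r$. Then $L_{\mathrm{out}}(K)$ (and likewise $L_{\mathrm{in}}(K)$) is a $2$-cycle, while $L_{\mathrm{out}}(K')$ is a single vertex with no loop. So no edge-preserving map exists at all.

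One of your repairs is therefore genuinely needed:
\begin{itemize}
\item reflexive graphs, equivalently using $\sim_h$ (that is, $M_{\mathrm{out}}$ instead of $A_{\mathrm{out}}$);
\item graph maps that may collapse an edge to a vertex;
\item restricting to homomorphisms injective on head fibres (tail fibres for $L_{\mathrm{in}}$).
\end{itemize}
For the third option, it is worth stating that this class is closed under composition. It is, because $f_T$ maps the fibre over $e$ into the fibre over $f_E(e)$. With any of these repairs your argument is complete.
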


This universal property characterises the free category generated by a
directed graph (see \cite[Sec.~II.7]{maclane1998cwm}; \cite[Ch.~4]{awodey2010}).

\begin{proof}
We treat $L_{\mathrm{out}}$; the argument for $L_{\mathrm{in}}$ is
analogous. Given a morphism $f : K \to K'$ in $\mathbf{KG}$, define
\[
  L_{\mathrm{out}}(f) : T \to T',
  \qquad
  L_{\mathrm{out}}(f)(\tau_j) := f_T(\tau_j).
\]
To verify this is a graph homomorphism, suppose $\tau_i \to \tau_j$ is
an edge in $L_{\mathrm{out}}(K)$, i.e.\ $h(\tau_i) = h(\tau_j)$.
Since $f$ preserves heads,
\[
  h(f_T(\tau_i)) = f_E(h(\tau_i)) = f_E(h(\tau_j)) = h(f_T(\tau_j)),
\]
so $f_T(\tau_i) \to f_T(\tau_j)$ is an edge in $L_{\mathrm{out}}(K')$.
Preservation of identities and composition follows from the fact that
$f_T$ is a function. Hence $L_{\mathrm{out}}$ is a functor.
\end{proof}

\section{Sites on Knowledge Graph Categories}\label{sec:sites}

To introduce semantic structure we equip $\mathcal{C}(K)$ with a
Grothendieck topology~\cite{maclane1992sheaves,johnstone2002elephant}.
Grothendieck topologies provide a general framework for defining
notions of covering and locality in categorical settings. In the
present context, the topology will encode how relational information
propagates across the knowledge graph through composable paths. The
resulting site structure allows semantic information to be assigned
locally and then extended globally through the sheaf condition
\cite{maclane1992sheaves,johnstone2002elephant}.

\begin{definition}[Path-covering topology]\label{def:topology}
For an entity $e \in E$ (viewed as an object of $\mathcal{C}(K)$), a
\emph{covering family} is a finite set of morphisms
$\{f_i : e_i \to e\}_{i \in I}$ in $\mathcal{C}(K)$ such that every
entity $e' \in E$ reachable from $e$ by a relational path is also
reachable from some $e_i$ via a path that factors through $e$.
\end{definition}

We denote by $J$ the Grothendieck topology on $\mathcal{C}(K)$ determined
by the covering families of Definition~\ref{def:topology}.

\begin{remark}[Interpretation of the topology]
The path-covering topology encodes how relational information may
propagate through the knowledge graph via composable triples. In this
sense it specifies which families of morphisms provide sufficient
local information to determine semantic interpretations at a given
entity. This interpretation parallels the role of open coverings in
topology, where local data defined on open sets determine global
structures when suitable compatibility conditions are satisfied
\cite{maclane1992sheaves,johnstone2002elephant}.
\end{remark}

\begin{proposition}[Grothendieck topology axioms]\label{prop:topology}
The covering families of Definition~\ref{def:topology} satisfy the three
axioms of a Grothendieck topology: maximality, stability under pullback,
and transitivity.
\end{proposition}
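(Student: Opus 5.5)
The first step is to unwind Definition~\ref{def:topology} and find out exactly which families it admits. Let $\{f_i : e_i \to e\}_{i\in I}$ be a finite family and let $e'$ be reachable from $e$ by a path $\gamma$. For any index $i$, the composite $\gamma\circ f_i$ is a path from $e_i$ to $e'$ that factors through $e$. So the reachability clause holds automatically as soon as $I\neq\emptyset$. Conversely, $e$ is reachable from itself by the identity (the empty path). Hence the empty family never satisfies the condition.

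I would therefore prove a preliminary lemma: \emph{a finite family into $e$ is covering if and only if it is nonempty}. All three axioms are then checked against this simpler criterion.

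\textbf{Maximality.} The singleton family $\{\mathrm{id}_e\}$ is finite and nonempty, so it covers $e$ by the lemma.

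\textbf{Transitivity.} Suppose $\{f_i : e_i\to e\}$ covers $e$, and for each $i$ the family $\{g_{ij} : e_{ij}\to e_i\}_j$ covers $e_i$. The composite family $\{f_i\circ g_{ij}\}_{i,j}$ is finite, because it is a finite union of finite sets. It is nonempty, because $I\neq\emptyset$ and each inner index set is nonempty. So it covers $e$.

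\textbf{Stability under pullback.} This is the step I expect to be the main obstacle. $\mathcal{C}(K)$ is a free category and generally has no pullbacks, so stability must be read in the sieve form. Let $g : d\to e$ and let $\{f_i\}$ cover $e$. The pulled-back family on $d$ consists of those morphisms $k$ into $d$ for which $g\circ k$ factors through some $f_i$. By the lemma it suffices to show that this family contains a finite nonempty subfamily.

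In $\mathcal{C}(K)$ an equation $g\circ k = f_i\circ k'$ is an equality of paths. It therefore amounts to requiring that $f_i$ and $g$ end compatibly: one of them must be a terminal segment of the other. I would first treat the case where $f_i$ is a terminal segment of $g$, writing $g = f_i\circ u$; then $k=\mathrm{id}_d$ works. In the opposite case, $f_i = g\circ v$, the morphism $k=v$ works. Next I would use the freedom in choosing a covering of $e$ to replace $\{f_i\}$ by a nonempty covering family in which these cases are guaranteed. The natural candidate is to adjoin $\mathrm{id}_e$, since enlarging a nonempty family keeps it covering. The pullback then contains $\mathrm{id}_d$.

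To make this legitimate, I would phrase the axioms at the level of generated sieves. A sieve on $e$ is declared covering when it contains a covering family in the sense of Definition~\ref{def:topology}. The pullback sieve $g^{*}S$ is then shown to contain a finite nonempty family of morphisms into $d$. The point I would check most carefully is whether this goes through when $S$ contains no morphism sharing a terminal segment with $g$. If it does not, the argument must fall back on the reachability clause evaluated directly at $d$.
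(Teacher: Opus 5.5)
\textbf{The gap is in your stability step, and it cannot be closed, because the proposition is false as stated.} Your reduction lemma is correct if you count the empty path as a relational path: a finite family into $e$ covers if and only if it is nonempty. Maximality and transitivity follow from it exactly as you say.

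Adjoining $\mathrm{id}_e$ to the family is not a legitimate move. The axiom must be checked for the \emph{given} covering sieve $S$ and every $g:d\to e$. Covering sieves are closed upwards, not downwards, so knowing that $g^{*}S'$ covers for some larger $S'\supseteq S$ tells you nothing about $g^{*}S$.

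The case you postpone to the end is exactly where the argument fails. Suppose no element of $S$ shares a terminal segment with $g$. Then your own path analysis gives $g^{*}S=\emptyset$, and your own lemma says the empty family does not cover $d$. So the reachability clause at $d$ gives you nothing to fall back on.

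This case already occurs in the example of Section~\ref{sec:example}. The family $\{\tau_1:A\to B\}$ covers $B$. Pull it back along $g=\tau_3:D\to B$. The only morphism into $D$ is $\mathrm{id}_D$, and $\tau_3\neq\tau_1\circ k'$ for the only candidate $k'=\mathrm{id}_A$. So the pullback is empty. The empty family does not cover $D$ under either reading of ``reachable'', since $D$ has outgoing triples.

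Your lemma actually identifies $J$ with the topology of all nonempty sieves. That is a Grothendieck topology precisely when $\mathcal{C}(K)$ satisfies the Ore condition: every cospan $c\to e\leftarrow d$ completes to a commutative square. By your terminal-segment observation, a free category satisfies this iff, for any two paths into the same object, one is a terminal segment of the other. That holds iff every entity is the tail of at most one triple, equivalently $A_{\mathrm{in}}=0$.

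The paper's own proof does not get around this. Its stability step asserts that the composite $e''\to e'\xrightarrow{g}e$ factors through some $f_i$. The covering condition, which concerns paths \emph{leaving} $e$, does not supply this, and it fails in the example above. The paper's appeal to a coverage does not help either. The same example violates the coverage axiom. The Grothendieck topology generated by these families makes the empty sieve cover every object, so its covers are not those of Definition~\ref{def:topology} and its sheaf topos is trivial.

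A correct version needs either the hypothesis that no two triples share a tail, or a different definition of covering family.
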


\begin{proof}
\emph{Maximality.} The identity morphism $\mathrm{id}_e : e \to e$
forms a singleton covering family, since every path through $e$ factors
through the identity.

\emph{Stability under pullback.}
Let $\{f_i : e_i \to e\}$ be a covering family and let
$g : e' \to e$ be any morphism in $\mathcal{C}(K)$.
In the free category $\mathcal{C}(K)$, morphisms correspond to
finite relational paths in the knowledge graph.

Consider an entity $e''$ reachable from $e'$ by a path
$e'' \to e'$. Composing with $g$ gives a path
\[
e'' \to e' \xrightarrow{g} e .
\]
Since $\{f_i : e_i \to e\}$ is a covering family, this composite
path factors through some morphism $f_i : e_i \to e$.
Hence the original path from $e''$ factors through the
pullback of $f_i$ along $g$.

Therefore every entity reachable from $e'$ factors through one
of the pulled-back morphisms, and the resulting family forms
a covering family of $e'$.

Formally, the path-covering condition defines a Grothendieck coverage
whose generated topology $J$ satisfies the axioms of a Grothendieck
topology without requiring the existence of all pullbacks
\cite{maclane1992sheaves,johnstone2002elephant}.

\emph{Transitivity.} Suppose $\{f_i : e_i \to e\}$ is a covering family
and that for each $i$ we have a covering family
$\{g_{ij} : e_{ij} \to e_i\}$. Consider the composites
\[
f_i \circ g_{ij} : e_{ij} \to e .
\]
Any entity reachable from $e$ is reachable from some $e_i$, and from
there from some $e_{ij}$ via the second covering family. Hence the
family $\{f_i \circ g_{ij} : e_{ij} \to e\}$ satisfies the covering
condition.
\end{proof}

We denote the resulting site by $(\mathcal{C}(K), J)$.

\subsection{Alternative interpretive topologies}

The path-covering topology introduced above captures relational
propagation along composable paths of triples. It is useful to
contrast this topology with a more restrictive interpretation in which
no contextual propagation occurs.

\begin{definition}[Atomic topology]\label{def:atomic-topology}
Let $\mathcal{C}(K)$ be the free category generated by a knowledge
graph $K$. The \emph{atomic topology} $J_{\mathrm{atom}}$ on
$\mathcal{C}(K)$ is the Grothendieck topology in which a covering
family of an object $e$ consists only of isomorphisms
\[
f:e'\to e .
\]
\end{definition}

In this topology, covering families express only identity-level
equivalences between objects and therefore do not propagate relational
information along paths of triples.
The topology $J_{\mathrm{atom}}$ represents a purely local interpretation of triples,
while the path-covering topology $J$ encodes contextual propagation of
relational information along composable paths.

The two topologies lead to two distinct site structures on the same
underlying category:

\[
S_{\mathrm{path}}=(\mathcal{C}(K),J),
\qquad
S_{\mathrm{atom}}=(\mathcal{C}(K),J_{\mathrm{atom}}).
\]

The topology $J$ reflects a contextual interpretation of the knowledge
graph in which relational paths propagate semantic information, while
the topology $J_{\mathrm{atom}}$ represents a purely local
interpretation in which objects are interpreted independently of such
propagation.

These two sites therefore correspond to two different logical
environments in which the same combinatorial knowledge graph can be
interpreted.

\begin{proposition}\label{prop:site-functoriality}
Let $f:K\to K'$ be a knowledge graph homomorphism in the sense of
Definition~\ref{def:kg-hom}. Then the induced functor

\[
\mathcal{C}(f):\mathcal{C}(K)\to\mathcal{C}(K')
\]

\noindent between the free categories preserves covering families of the
path-covering topology. Consequently $\mathcal{C}(f)$ defines a
morphism of sites

\[
(\mathcal{C}(K),J)\longrightarrow(\mathcal{C}(K'),J').
\]
\end{proposition}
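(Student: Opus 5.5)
The plan has three parts: build $\mathcal{C}(f)$ from the universal property, show it sends $J$-covers to $J'$-covers, and then read off a morphism of sites in the sense the paper is using.

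\emph{Constructing the functor.} Given $f=(f_E,f_P)$, we assign to each entity $e$ the object $f_E(e)$ of $\mathcal{C}(K')$. To each triple $\tau=(h,p,t)$ we assign the generating morphism $(f_E(h),f_P(p),f_E(t))\in T'$. This triple lies in $T'$ by Definition~\ref{def:kg-hom}, and it is an arrow $f_E(h)\to f_E(t)$ by Proposition~\ref{prop:domcod}. Theorem~\ref{thm:freecat-universal} then gives a unique functor $\mathcal{C}(f):\mathcal{C}(K)\to\mathcal{C}(K')$. It sends a path $\tau_k\circ\dots\circ\tau_1$ to the image path $f(\tau_k)\circ\dots\circ f(\tau_1)$. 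The consequences we need are that $\mathcal{C}(f)$ sends paths to paths of the same length, preserves composition and identities, and preserves reachability: if $e'$ is reachable from $e$ in $K$, then $f_E(e')$ is reachable from $f_E(e)$ in $K'$.

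\emph{Preservation of covers.} Let $\{f_i:e_i\to e\}$ be a $J$-covering family, and consider the image family $\{\mathcal{C}(f)(f_i):f_E(e_i)\to f_E(e)\}$. We must show that every entity $x'$ reachable from $f_E(e)$ in $K'$ is reachable from some $f_E(e_i)$ by a path factoring through $f_E(e)$. For $x'$ of the form $f_E(x)$, with $x$ reachable from $e$, we push forward the witnessing path $e_i\to e\to x$ along $\mathcal{C}(f)$. This is legitimate because functoriality preserves the factorisation through $e$.

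The main obstacle is an entity $x'$ reachable from $f_E(e)$ in $K'$ that is not the image of anything reachable from $e$, since $K'$ may contain triples outside the image of $f$. Here we use the shape of the covering condition. A path from $f_E(e_i)$ that factors through $f_E(e)$ is just $\mathcal{C}(f)(f_i)$ followed by any path $f_E(e)\to x'$. So it suffices to have at least one $i$, which holds whenever $e$ has a cover at all. Moreover, every $f_i$ has codomain $e$, so the composite $f_E(e_i)\to f_E(e)\to x'$ reaches $x'$ through $f_E(e)$ trivially. I would therefore first observe that the covering condition of Definition~\ref{def:topology} reduces to the family being nonempty plus the codomain condition, and then verify both properties for the image family directly. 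If the empty family happens to cover an object $e$ (which occurs when nothing is reachable from $e$), I would treat that case separately, or read Definition~\ref{def:topology} as requiring nonempty families.

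\emph{Morphism of sites.} Having shown that $\mathcal{C}(f)$ preserves covering families (and, by the same argument, covering sieves of the generated topology, since sieves generated by covering families map into sieves containing the image families), I would interpret ``morphism of sites'' as a cover-preserving functor, which is the sense intended here. I would note that full flatness or continuity is not needed for this statement. If a continuity statement were required, one would additionally check that precomposition with $\mathcal{C}(f)$ sends $J'$-sheaves to $J$-sheaves. That check follows from cover preservation together with the compatibility of pullbacks along $g$ from Proposition~\ref{prop:topology}.
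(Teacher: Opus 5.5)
\textbf{Verdict: correct, by a different and sharper route than the paper, with two loose ends.}

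\textbf{Comparison with the paper.} Your argument proves the statement in the sense the paper intends: covering families, with ``morphism of sites'' meaning a cover-preserving functor, as in the proof of Proposition~\ref{prop:topology-inclusion}. The paper's proof only observes that $\mathcal{C}(f)$ maps relational paths to relational paths, so reachability is preserved, and concludes that image families still cover. That handles entities of $K'$ of the form $f_E(x)$ with $x$ reachable from $e$. It says nothing about entities reachable from $f_E(e)$ only through triples of $K'$ outside the image of $f$, which is exactly the case you isolate.

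Your key observation is that the condition of Definition~\ref{def:topology} holds for any nonempty family of arrows into $e$: precompose an arbitrary path out of $e$ with any $f_i$. This closes that gap and makes your push-forward step for $x'=f_E(x)$ redundant. It also exposes how weak the path-covering condition really is. The paper's version is shorter and more ``semantic'', but it is not a complete proof.

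\textbf{Loose end 1: the empty family.} Do not plan to ``treat separately'' an empty covering family. If the empty path is not counted as a relational path, there is a genuine counterexample. Take $K$ to be a single isolated entity $e$, and map it to an entity of $K'$ that has an outgoing triple. Then the empty family covers $e$ in $K$, but its image does not cover in $K'$. The right fix is to count the empty path, since identities are empty paths in Definition~\ref{def:freecat}. Then $e$ is always reachable from itself, and the empty family never covers.

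\textbf{Loose end 2: the parenthetical extensions.} Your extension to covering sieves of the \emph{generated} topology does not go through. Neither does the continuity check you suggest via Proposition~\ref{prop:topology}.
\begin{itemize}
\item Any Grothendieck topology containing $\langle\tau_1\rangle$ as a cover of $B$ in the example of Section~\ref{sec:example} must contain its pullback $\tau_3^{*}\langle\tau_1\rangle=\emptyset$ on $D$. The only arrow into $D$ is $\mathrm{id}_D$, and $\tau_3$ does not factor through $\tau_1$.
\item Let $K'$ consist of the single triple $(X,r,Y)$, and let $f$ send $A,D\mapsto X$, $B,C\mapsto Y$, and every predicate to $r$.
\item The topology $J'$ on $\mathcal{C}(K')$ is just the nonempty sieves, so the empty sieve on $X=f_E(D)$ is not $J'$-covering. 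Hence $\mathcal{C}(f)$ does not preserve covering sieves of the generated topology.
\item The $J'$-sheaf with $G(X)=G(Y)=\{0,1\}$ and identity restriction pulls back to a presheaf that is not a $J$-sheaf, because $J$-sheaves must be a singleton at $D$. Hence continuity fails too.
\end{itemize}
So keep your claim at the level of covering families, which is all the proposition asserts and all the paper proves.
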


\begin{proof}
A covering family of an object $e$ in $\mathcal{C}(K)$ consists of
morphisms $\{f_i:e_i\to e\}$ such that every entity reachable from $e$
through a relational path factors through one of the $f_i$.
Since a knowledge graph homomorphism preserves heads, tails, and
predicates of triples, the induced functor $\mathcal{C}(f)$ maps
relational paths in $K$ to relational paths in $K'$.
Therefore reachability relations between entities are preserved.
Hence the image of a covering family under $\mathcal{C}(f)$ again
satisfies the covering condition in $(\mathcal{C}(K'),J')$.
\end{proof}

\begin{proposition}\label{prop:topology-inclusion}
Let $J_{\mathrm{atom}}$ be the atomic topology and $J$ the
path-covering topology on $\mathcal{C}(K)$.

Then every covering family in $J_{\mathrm{atom}}$ is also a covering
family in $J$. Consequently, the identity functor

\[
\mathrm{id}_{\mathcal{C}(K)} : (\mathcal{C}(K),J)
\longrightarrow
(\mathcal{C}(K),J_{\mathrm{atom}})
\]

\noindent defines a morphism of sites.
\end{proposition}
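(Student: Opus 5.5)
The plan has two parts. First, check that atomic covers are path covers directly from Definition~\ref{def:atomic-topology} and Definition~\ref{def:topology}. Then deduce the morphism-of-sites claim from the standard criterion for identity-on-objects functors between two topologies on the same category.

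\emph{Step 1: isomorphisms in $\mathcal{C}(K)$.} In the free category $\mathcal{C}(K)$, a morphism is a finite path, and the length of a composite is the sum of the lengths. So if $f\circ g=\mathrm{id}$, both $f$ and $g$ must have length zero. Hence the only isomorphisms are the identities (empty paths), and an atomic covering family of $e$ is just the singleton $\{\mathrm{id}_e\}$, possibly with repetitions.

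\emph{Step 2: atomic covers are path covers.} By the maximality clause in the proof of Proposition~\ref{prop:topology}, $\{\mathrm{id}_e\}$ is a $J$-covering family. Explicitly, every $e'$ reachable from $e$ by a path $\gamma$ is reached from $e$ itself via $\gamma=\gamma\circ\mathrm{id}_e$, which factors through $e$. This gives $J_{\mathrm{atom}}(e)\subseteq J(e)$ for every object $e$, which is the first assertion. Thus $J_{\mathrm{atom}}$ is the coarser (smaller) topology, and $J$ is finer.

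\emph{Step 3: the morphism of sites.} I will use the standard fact (Mac Lane--Moerdijk \cite{maclane1992sheaves}; Johnstone \cite{johnstone2002elephant}) that when $J_1\subseteq J_2$ on the same category $\mathcal{C}$, every $J_2$-sheaf is a $J_1$-sheaf. This gives an inclusion $\mathrm{Sh}(\mathcal{C},J_2)\hookrightarrow\mathrm{Sh}(\mathcal{C},J_1)$ whose left adjoint is $J_2$-sheafification restricted to $J_1$-sheaves. It is left exact, so we get a geometric morphism, and the identity functor is cover-preserving as a map from the coarser site to the finer one. Concretely, for the identity to be a morphism of sites I need two things. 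First, it is trivially flat, since the identity is representably flat. Second, it must send covers of the source site to covers of the target site, and by Step~2 this is exactly the inclusion $J_{\mathrm{atom}}\subseteq J$.

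\emph{Main obstacle.} The main obstacle is the direction convention. As literally written, the arrow goes from $(\mathcal{C}(K),J)$ to $(\mathcal{C}(K),J_{\mathrm{atom}})$, and the covering-preserving direction established in Step~2 is the reverse. I would resolve this by adopting the convention that a morphism of sites $(\mathcal{C},J)\to(\mathcal{D},J')$ is given by a cover-preserving flat functor $\mathcal{D}\to\mathcal{C}$ in the opposite direction. Under that convention, the displayed arrow $(\mathcal{C}(K),J)\to(\mathcal{C}(K),J_{\mathrm{atom}})$ corresponds to the identity sending $J_{\mathrm{atom}}$-covers to $J$-covers. It also induces the geometric morphism $\mathrm{Sh}(\mathcal{C}(K),J)\to\mathrm{Sh}(\mathcal{C}(K),J_{\mathrm{atom}})$, which is the subtopos inclusion. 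I would state this convention explicitly in the proof so the claimed direction is justified.
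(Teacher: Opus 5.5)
Your proposal is correct and follows essentially the same route as the paper: atomic covers consist of identities (isomorphisms), which trivially satisfy the path-covering condition because any path out of $e$ factors through $\mathrm{id}_e$, so the identity functor sends $J_{\mathrm{atom}}$-covers to $J$-covers. You go further than the paper in one respect: you explicitly adopt the convention that a morphism of sites $(\mathcal{C},J)\to(\mathcal{D},J')$ is given by a cover-preserving flat functor $\mathcal{D}\to\mathcal{C}$, which resolves the direction mismatch in the displayed arrow that the paper's proof passes over silently (your Step~1 is harmless but not needed, since the factoring argument works for any isomorphism $f:e'\to e$).
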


\begin{proof}
In the atomic topology $J_{\mathrm{atom}}$, the only covering families
consist of isomorphisms $f:e'\to e$. Such morphisms trivially satisfy
the path-covering condition of Definition~\ref{def:topology} because
any path beginning at $e$ factors through the identity morphism of $e$.
Hence every covering family of $J_{\mathrm{atom}}$ is also a covering
family of $J$.

Therefore the identity functor on $\mathcal{C}(K)$ sends coverings of
$J_{\mathrm{atom}}$ to coverings of $J$, and thus defines a morphism
of sites \cite{maclane1992sheaves,johnstone2002elephant}.
\end{proof}

\section{Sheaves and Topos Construction}\label{sec:topos}

\begin{definition}[Presheaf and sheaf]\label{def:sheaf}
A \emph{presheaf} on $(\mathcal{C}(K), J)$ is a contravariant functor
$F : \mathcal{C}(K)^{\mathrm{op}} \to \mathbf{Set}$. It is a
\emph{sheaf} if for every object $e$ and every covering family
$\{f_i : e_i \to e\}$, the diagram
\[
  F(e) \;\longrightarrow\; \prod_i F(e_i)
  \;\rightrightarrows\; \prod_{i,j} F(e_i \times_e e_j)
\]
is an equaliser.
\end{definition}

Presheaf categories of the form
\[
[\mathcal{C}^{\mathrm{op}},\mathbf{Set}]
\]
provide canonical examples of topoi in category theory
\cite{maclane1992sheaves,johnstone2002elephant}. The sheaf condition
strengthens the presheaf structure by requiring that compatible local
data defined on covering families glue uniquely to global data. In the
context of knowledge graphs this property expresses the principle that
locally consistent semantic information can be assembled into a global
interpretation of the relational structure.

\begin{theorem}[Sheaf topos]\label{thm:topos}
The category $\mathrm{Sh}(\mathcal{C}(K), J)$ of sheaves on the site
$(\mathcal{C}(K), J)$ is a Grothendieck topos.
\end{theorem}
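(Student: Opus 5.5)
The proof will apply Giraud's definition: a Grothendieck topos is any category equivalent to the category of sheaves on a small site. It therefore suffices to check two things. First, $\mathcal{C}(K)$ is small. Second, $J$ is a genuine Grothendieck topology on it.

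\textbf{Smallness.} The objects of $\mathcal{C}(K)$ are the elements of the set $E$. By Definition~\ref{def:freecat}, the morphisms are finite composable sequences of triples from the set $T$. Hence $\mathrm{Mor}(\mathcal{C}(K))\subseteq\bigsqcup_{k\ge 0}T^k$, which is a set.

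\textbf{$J$ is a Grothendieck topology.} This is Proposition~\ref{prop:topology}. I will be careful here, because the covering families of Definition~\ref{def:topology} are only a coverage: $\mathcal{C}(K)$ generally lacks pullbacks, so the fibre products $e_i\times_e e_j$ in Definition~\ref{def:sheaf} need not exist. I would therefore pass to sieves. For each object $e$, take $J(e)$ to be the sieves on $e$ that contain a covering family in the sense of Definition~\ref{def:topology}, and close under the usual saturation, which is the topology generated by the coverage. I will then restate the sheaf condition in the equivalent sieve form: every matching family for a covering sieve has a unique amalgamation. For a coverage, this agrees with the condition stated in terms of covering families, using compatibility along all commuting squares $f_i\circ u=f_j\circ v$ in place of fibre products. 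This identification is the main obstacle. If $J$ is used literally through Definition~\ref{def:sheaf}, the objects $e_i\times_e e_j$ are undefined, so the reinterpretation must be made explicit. With it, the sheaf category is unchanged.

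\textbf{Conclusion via the reflector.} With a small site, the standard theorem \cite{maclane1992sheaves,johnstone2002elephant} gives that $\mathrm{Sh}(\mathcal{C}(K),J)$ is a full subcategory of the presheaf topos $[\mathcal{C}(K)^{\mathrm{op}},\mathbf{Set}]$. The plus construction applied twice gives a left-exact left adjoint $a=(-)^{++}$ to the inclusion. Hence $\mathrm{Sh}(\mathcal{C}(K),J)$ is a left-exact reflective subcategory of a presheaf topos, i.e.\ a Grothendieck topos.

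If a direct check is preferred, I would verify the elementary-topos structure explicitly:
\begin{itemize}
\item finite limits are computed pointwise and preserve sheaves;
\item exponentials $G^F$ of presheaves are sheaves when $G$ is a sheaf;
\item the subobject classifier is the presheaf of $J$-closed sieves.
\end{itemize}
Giraud's conditions (a small generating set given by the sheafified representables $a(y(e))$, together with exactness properties inherited via $a$) then follow.
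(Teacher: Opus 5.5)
Your main line is the paper's argument: sheafification exhibits $\mathrm{Sh}(\mathcal{C}(K),J)$ as a left-exact reflective subcategory of $[\mathcal{C}(K)^{\mathrm{op}},\mathbf{Set}]$. The smallness check and the passage to sieves are sensible additions that the paper skips.

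The trouble is how you dispose of the step you flag as the main obstacle. You cite Proposition~\ref{prop:topology} to say the families of Definition~\ref{def:topology} form a coverage. From this you conclude that the family-wise gluing condition of Definition~\ref{def:sheaf}, with commuting squares replacing $e_i\times_e e_j$, defines the same category as sieve-sheaves for the generated topology.

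But pullback stability fails in a free category. An equation $g\circ h=f_i\circ k$ forces one of the paths $g$, $f_i$ to be a suffix of the other. Take the example of Section~\ref{sec:example}. The family $\{\tau_1\}$ covers $B$, yet its pullback along $\tau_3:D\to B$ is the empty sieve: the only arrow into $D$ is $\mathrm{id}_D$, and $\tau_3$ does not factor through $\tau_1$. The empty family does not cover $D$ under Definition~\ref{def:topology}, since $D$ reaches $B$ and $C$. So there is no coverage.

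The identification then genuinely fails. Read reachability to include the empty path. The families $\{\tau_1\}$, $\{\tau_3\}$, $\{\tau_1,\tau_3\}$ force $F(\tau_1)$, $F(\tau_3)$ and $F(B)\to F(A)\times F(D)$ all to be bijections, and the analogous families at $C$ do the same there. So a family-wise sheaf must be the empty presheaf or the terminal one. That category is the chain $0\to 1$, which has no subobject classifier. Meanwhile, the generated topology contains the empty sieve on $A$ and on $D$ by pullback, hence on $B$ and $C$ by transitivity. Its sheaf topos is therefore trivial, so the two categories differ.

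The fix is to drop the identification rather than prove it. Any collection of sieves generates a Grothendieck topology, namely the intersection of all topologies containing it. Take $J$ to be that topology, and define sheaves by unique amalgamation of matching families on $J$-covering sieves. Smallness together with $(-)^{++}$ then yields the theorem exactly as in the paper, with no appeal to Proposition~\ref{prop:topology}. What you may not claim is that this is the category of Definition~\ref{def:sheaf} read literally. In the paper's own example, the topos you obtain is degenerate.
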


\begin{proof}
By Giraud's theorem~\cite{johnstone2002elephant}, a category is a
Grothendieck topos if and only if it is a left-exact localisation of a
presheaf category. The category of presheaves
\[
\widehat{\mathcal{C}(K)} =
[\mathcal{C}(K)^{\mathrm{op}},\mathbf{Set}]
\]
is a topos. The sheafification functor
\[
a:\widehat{\mathcal{C}(K)} \to \mathrm{Sh}(\mathcal{C}(K),J)
\]
is left adjoint to the inclusion
\[
i:\mathrm{Sh}(\mathcal{C}(K),J) \hookrightarrow
\widehat{\mathcal{C}(K)}
\]
and the functor $a$ is left exact. Hence
$\mathrm{Sh}(\mathcal{C}(K),J)$ is a Grothendieck topos.
\end{proof}

\begin{proposition}\label{prop:two-topoi}
Let $J$ be the path-covering topology of
Definition~\ref{def:topology} and let $J_{\mathrm{atom}}$
be the atomic topology of
Definition~\ref{def:atomic-topology}.

Then the sites

\[
(\mathcal{C}(K),J)
\qquad\text{and}\qquad
(\mathcal{C}(K),J_{\mathrm{atom}})
\]

\noindent give rise to two Grothendieck topoi

\[
\mathrm{Sh}(\mathcal{C}(K),J)
\qquad\text{and}\qquad
\mathrm{Sh}(\mathcal{C}(K),J_{\mathrm{atom}}).
\]
\end{proposition}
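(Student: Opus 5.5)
The plan is to reduce the statement to Theorem~\ref{thm:topos} applied twice, once for each topology. The only thing to check separately is that both $J$ and $J_{\mathrm{atom}}$ are genuine Grothendieck topologies on the same small category $\mathcal{C}(K)$. Then each category of sheaves is a reflective, left-exact localisation of the presheaf topos $[\mathcal{C}(K)^{\mathrm{op}},\mathbf{Set}]$.

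First, I note that $\mathcal{C}(K)$ is small. Its objects form the set $E$, and its morphisms are finite paths of triples drawn from the set $T$. So the presheaf category $\widehat{\mathcal{C}(K)}$ is a well-defined Grothendieck topos.

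Second, for $J$, I invoke Proposition~\ref{prop:topology}. It gives that the path-covering families generate a Grothendieck topology, so $(\mathcal{C}(K),J)$ is a site, and Theorem~\ref{thm:topos} then yields that $\mathrm{Sh}(\mathcal{C}(K),J)$ is a Grothendieck topos.

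Third, for $J_{\mathrm{atom}}$, the three axioms have to be verified directly, since Definition~\ref{def:atomic-topology} only specifies covering families of isomorphisms.
\begin{itemize}
\item \emph{Maximality} holds because $\mathrm{id}_e$ is an isomorphism.
\item \emph{Transitivity} holds because a composite of isomorphisms is an isomorphism.
\item \emph{Stability} needs care, because $\mathcal{C}(K)$ need not have pullbacks. I would work with sieves instead. The sieve generated by an isomorphism $f:e'\to e$ is the maximal sieve on $e$. In the free category the only isomorphisms are identities, since a path of length $\ge 1$ has no inverse as lengths add. Hence $J_{\mathrm{atom}}(e)=\{\text{maximal sieve on } e\}$, which is the trivial (minimal) Grothendieck topology. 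Pulling back a maximal sieve along any $g$ gives a maximal sieve, so stability holds.
\end{itemize}
With the axioms established, the argument of Theorem~\ref{thm:topos} applies verbatim. Here it is even simpler: every presheaf is a sheaf for the trivial topology, so $\mathrm{Sh}(\mathcal{C}(K),J_{\mathrm{atom}})=\widehat{\mathcal{C}(K)}$, which is a topos directly.

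The main obstacle is the stability axiom for $J$ in the absence of pullbacks. Proposition~\ref{prop:topology} handles this by passing to the generated topology, and I would rely on that result as stated. If more care were needed, I would replace the families in Definition~\ref{def:topology} by the sieves they generate, take the smallest Grothendieck topology containing them (such a topology always exists as an intersection of topologies), and call it $J$. Theorem~\ref{thm:topos} then applies regardless of whether the original families were already closed under the axioms. The conclusion is that both sites give Grothendieck topoi, as claimed.
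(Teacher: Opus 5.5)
Your proposal is correct and follows the paper's route. The paper applies the general fact that sheaves on any site form a Grothendieck topos to $J$ and to $J_{\mathrm{atom}}$; you additionally check smallness and note that $J_{\mathrm{atom}}$ is the trivial topology, because free categories have only identity isomorphisms, so its sheaf topos is the whole presheaf category.

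Your fallback of taking the Grothendieck topology \emph{generated} by the path-covering families is the reading you actually need, not an optional refinement. Those families are not pullback-stable as stated: in the paper's example $\{\tau_1\}$ covers $B$, but its pullback along $\tau_3$ is the empty sieve on $D$, which does not cover $D$.
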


\begin{proof}
For any site $(\mathcal{C},J)$ the category of sheaves
$\mathrm{Sh}(\mathcal{C},J)$ is a Grothendieck topos
\cite{maclane1992sheaves,johnstone2002elephant}.
Applying this general result to the two Grothendieck
topologies $J$ and $J_{\mathrm{atom}}$ on the same category
$\mathcal{C}(K)$ yields the two stated topoi.
\end{proof}

\begin{theorem}\label{thm:geom-morphism}
Let $f:K\to K'$ be a knowledge graph homomorphism and let

\[
\mathcal{C}(f):\mathcal{C}(K)\to\mathcal{C}(K')
\]

\noindent be the induced functor between the free categories.  
Then $f$ induces a geometric morphism between the associated
sheaf topoi

\[
\mathrm{Sh}(\mathcal{C}(K),J)
\longrightarrow
\mathrm{Sh}(\mathcal{C}(K'),J').
\]
\end{theorem}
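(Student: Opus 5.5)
The plan is to obtain the geometric morphism from the morphism of sites given by Proposition~\ref{prop:site-functoriality}, using the standard comparison between morphisms of sites and geometric morphisms \cite{maclane1992sheaves,johnstone2002elephant}. Write $u=\mathcal{C}(f):\mathcal{C}(K)\to\mathcal{C}(K')$, which exists by the universal property of Theorem~\ref{thm:freecat-universal} applied to $e\mapsto f_E(e)$ and $\tau\mapsto(f_E(h),f_P(p),f_E(t))$. I will define the direct image first, then produce its left exact left adjoint.

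\textbf{Direct image.} For a sheaf $G$ on $(\mathcal{C}(K'),J')$, set $u_*G:=G\circ u^{\mathrm{op}}$. This is the restriction functor on presheaves. I will check that $u_*G$ is a $J$-sheaf. Take a $J$-covering family $\{f_i:e_i\to e\}$. By Proposition~\ref{prop:site-functoriality}, $\{u(f_i)\}$ is a $J'$-covering of $u(e)$. A compatible family for $u_*G$ on $\{f_i\}$ should then give a compatible family for $G$ on $\{u(f_i)\}$, and so glue uniquely.

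\textbf{Inverse image.} Define $u^*:=a'\circ\mathrm{Lan}_{u^{\mathrm{op}}}\circ i$, where $a'$ is sheafification for $J'$ (Theorem~\ref{thm:topos}) and $\mathrm{Lan}$ is left Kan extension along $u^{\mathrm{op}}$. The adjunction $u^*\dashv u_*$ follows by composing $a'\dashv i'$ with $\mathrm{Lan}_{u^{\mathrm{op}}}\dashv(-)\circ u^{\mathrm{op}}$, once we know that restriction sends $J'$-sheaves to $J$-sheaves, which is the previous step.

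\textbf{Main obstacle: left exactness of $u^*$.} The sheafification $a'$ is left exact, but $\mathrm{Lan}_{u^{\mathrm{op}}}$ is left exact only when $u$ is flat, i.e.\ when each category of elements $(d\downarrow u)$ is cofiltered. I will not try to prove flatness directly for free categories, since it can fail when several paths do not commute. Instead I will work in the setting of morphisms of sites used by Mac Lane--Moerdijk: cover-preserving together with a representably flat or ``covering-flat'' condition. For free categories, pairs of parallel paths have no nontrivial equalizing cone, so the cofilteredness condition has to be checked case by case.

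First I will try to show that $u=\mathcal{C}(f)$ is covering-flat. Here the path-covering condition is generous: reachability is preserved, which is the argument of Proposition~\ref{prop:site-functoriality}. I would aim to verify the required local cofilteredness by covering each cone with families of paths. If that fails, the fallback is to present the sites as coverages generating the topologies, so that the comparison lemma for coverages in \cite{johnstone2002elephant} applies.

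A remaining technical point is the sheaf condition in Definition~\ref{def:sheaf}. It refers to fibre products $e_i\times_e e_j$, which need not exist in $\mathcal{C}(K)$. I will replace it by the equivalent sieve formulation before running the argument.
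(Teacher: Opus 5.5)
\textbf{Direction.} Your construction gives a geometric morphism in the opposite direction to the one asserted. Your direct image is restriction, $u_*G=G\circ\mathcal{C}(f)^{\mathrm{op}}$, which runs from $\mathrm{Sh}(\mathcal{C}(K'),J')$ to $\mathrm{Sh}(\mathcal{C}(K),J)$. Even if completed, your argument would therefore produce $\mathrm{Sh}(\mathcal{C}(K'),J')\to\mathrm{Sh}(\mathcal{C}(K),J)$. That is the usual contravariance for a flat cover-preserving functor, but it is not the theorem.

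The paper instead takes restriction along $\mathcal{C}(f)$, followed by $J$-sheafification, as the \emph{inverse} image $f^*\colon\mathrm{Sh}(\mathcal{C}(K'),J')\to\mathrm{Sh}(\mathcal{C}(K),J)$, which gives the stated arrow. In that direction your ``main obstacle'' disappears: restriction preserves all limits and sheafification is left exact, so $f^*$ is left exact for free. What has to be checked instead is that $f^*$ has a right adjoint on sheaves, i.e.\ that right Kan extension along $\mathcal{C}(f)^{\mathrm{op}}$ sends $J$-sheaves to $J'$-sheaves. The standard sufficient condition is that $\mathcal{C}(f)$ be cover-\emph{reflecting}: for every $J'$-covering sieve $R$ on $\mathcal{C}(f)(e)$, the sieve $\{g \mid \mathcal{C}(f)(g)\in R\}$ on $e$ must be $J$-covering. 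That is not the cover preservation of Proposition~\ref{prop:site-functoriality}, even though the paper's proof appeals to that proposition at this point.

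\textbf{Your own direction.} Even granting your direction, both substantive steps are left open, and both fail in general.

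First, cover preservation does not make restriction preserve sheaves, because $\mathcal{C}(f)$ can create overlaps that are absent in $\mathcal{C}(K)$. Let $K$ have triples $\tau_p=(A,p,B)$ and $\tau_q=(A,q,B)$, and let $K'$ have the single triple $\tau_r=(A,r,B)$, with $f_E=\mathrm{id}$ and $f_P(p)=f_P(q)=r$. Then $\{\tau_p,\tau_q\}$ covers $B$. Every pair $(s_p,s_q)\in G(A)^2$ is a matching family for $u_*G$, since $\tau_p k=\tau_q l$ never holds in a free category. An amalgamation $x\in G(B)$, however, would need $G(\tau_r)x=s_p$ and $G(\tau_r)x=s_q$. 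Hence $u_*G$ is a $J$-sheaf only if $|G(A)|\le 1$, whereas $G(A)=G(B)=\{0,1\}$ with $G(\tau_r)=\mathrm{id}$ is a $J'$-sheaf.

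Second, the same example shows that $\mathcal{C}(f)$ is not flat. In $(A\downarrow\mathcal{C}(f))$ the parallel arrows $\tau_p,\tau_q\colon(A,\mathrm{id}_A)\to(B,\tau_r)$ have no equalising cone, since free categories are cancellative. Left exactness itself can also fail. If $K'$ has an isolated entity $X$ outside the image of $f_E$, then $\mathrm{Lan}_{\mathcal{C}(f)^{\mathrm{op}}}(1)(X)=\emptyset$, and $a'$ does not repair this unless the empty sieve covers $X$. So $u^*$ does not preserve the terminal object.

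Your plan to ``try covering-flatness, else fall back on coverages'' is therefore not a routine deferred check; it is exactly where the route breaks. As it stands, the proposal establishes neither that $u_*$ lands in sheaves nor that $u^*$ is left exact, and it aims at the reverse arrow.
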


\begin{proof}
By Proposition~\ref{prop:site-functoriality}, the functor
$\mathcal{C}(f)$ defines a morphism of sites

\[
(\mathcal{C}(K),J)\to(\mathcal{C}(K'),J').
\]

A morphism of sites induces an adjoint pair of functors between the
associated sheaf categories

\[
f^*:\mathrm{Sh}(\mathcal{C}(K'),J')\to
\mathrm{Sh}(\mathcal{C}(K),J),
\qquad
f_*:\mathrm{Sh}(\mathcal{C}(K),J)\to
\mathrm{Sh}(\mathcal{C}(K'),J').
\]

The functor $f^*$ is obtained by precomposition with
$\mathcal{C}(f)$ followed by sheafification and is left exact.
Hence $(f^*,f_*)$ forms a geometric morphism of topoi.
\end{proof}

\begin{theorem}[Geometric morphism between interpretive topoi]
\label{thm:interpretive-morphism}

Let $J$ be the path-covering topology and
$J_{\mathrm{atom}}$ the atomic topology on $\mathcal{C}(K)$.

Then the identity functor on $\mathcal{C}(K)$ induces a geometric
morphism

\[
\mathrm{Sh}(\mathcal{C}(K),J)
\longrightarrow
\mathrm{Sh}(\mathcal{C}(K),J_{\mathrm{atom}}).
\]

\end{theorem}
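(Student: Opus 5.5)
The plan is to exploit the comparison of topologies in Proposition~\ref{prop:topology-inclusion}: every $J_{\mathrm{atom}}$-covering family is a $J$-covering family, so $J_{\mathrm{atom}}\subseteq J$ as Grothendieck topologies on the same category $\mathcal{C}(K)$. For two topologies $J_1\subseteq J_2$ on one category $\mathcal{C}$, every $J_2$-sheaf is automatically a $J_1$-sheaf, since the sheaf condition for $J_2$ includes the conditions for all the coverings of $J_1$. This gives a fully faithful inclusion
\[
i:\mathrm{Sh}(\mathcal{C}(K),J)\hookrightarrow \mathrm{Sh}(\mathcal{C}(K),J_{\mathrm{atom}}),
\]
which is exactly the functor induced by the identity of $\mathcal{C}(K)$: a sheaf $F:\mathcal{C}(K)^{\mathrm{op}}\to\mathbf{Set}$ is sent to itself, i.e.\ precomposed with $\mathrm{id}_{\mathcal{C}(K)}$. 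This $i$ will be the direct image $f_*$ of the desired geometric morphism.

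Next I will construct a left adjoint to $i$. Write $a_J$ for $J$-sheafification on presheaves and $\iota$ for the inclusion of $J_{\mathrm{atom}}$-sheaves into presheaves. Set $f^*=a_J\circ\iota$. The adjunction $f^*\dashv i$ follows by composing the adjunction $a_J\dashv$ (inclusion of $J$-sheaves into presheaves) with full faithfulness of $\iota$. Concretely, for a $J_{\mathrm{atom}}$-sheaf $G$ and a $J$-sheaf $F$,
\[
\mathrm{Hom}(a_J\iota G,F)\cong \mathrm{Hom}_{\widehat{\mathcal{C}(K)}}(\iota G,F)\cong \mathrm{Hom}(G,iF).
\]
Left exactness of $f^*$ follows because $\iota$ preserves finite limits, being a right adjoint to $a_{J_{\mathrm{atom}}}$, and $a_J$ is left exact as recalled in the proof of Theorem~\ref{thm:topos}. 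Hence $(f^*,f_*)$ is a geometric morphism, indeed an embedding, since $f_*$ is fully faithful.

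Alternatively, the same conclusion can be read off as in Theorem~\ref{thm:geom-morphism}, by viewing $\mathrm{id}_{\mathcal{C}(K)}$ as the morphism of sites of Proposition~\ref{prop:topology-inclusion}. I prefer the direct argument above, because it avoids verifying flatness conditions for the site morphism.

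The main obstacle is directional bookkeeping: checking that the inclusion $J_{\mathrm{atom}}\subseteq J$ indeed places $J$-sheaves inside $J_{\mathrm{atom}}$-sheaves, so that the morphism runs from $\mathrm{Sh}(\mathcal{C}(K),J)$ to $\mathrm{Sh}(\mathcal{C}(K),J_{\mathrm{atom}})$ as stated. One must also be careful that the atomic topology as defined, with coverings consisting of isomorphisms (effectively the trivial topology in a free category), yields all presheaves. In that case $\iota$ is the identity and $f^*=a_J$, which makes the argument especially transparent. The introduction also describes this morphism as \emph{essential}; that would require an extra left adjoint $f_!\dashv f^*$, which I would not attempt here beyond noting it is not needed for the statement as given.
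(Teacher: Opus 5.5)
Your proposal is correct and is essentially the paper's argument. You use the same two functors: the direct image is restriction along $\mathrm{id}_{\mathcal{C}(K)}$, i.e.\ the inclusion of $J$-sheaves into $J_{\mathrm{atom}}$-sheaves justified by $J_{\mathrm{atom}}\subseteq J$, and the inverse image is $J$-sheafification $a_J\circ\iota$. Your verification is more complete than the paper's, since you check the adjunction explicitly and derive left exactness from the left exactness of $a_J$ together with $\iota$ preserving finite limits, whereas the paper only cites that precomposition preserves finite limits and ignores the sheafification step. You also correctly note that $J_{\mathrm{atom}}$ is the trivial topology, so the morphism is simply the canonical geometric embedding $\mathrm{Sh}(\mathcal{C}(K),J)\hookrightarrow\widehat{\mathcal{C}(K)}$.
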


\begin{proof}
Let
\[
f : (\mathcal{C}(K),J) \longrightarrow (\mathcal{C}(K),J_{\mathrm{atom}})
\]
be the morphism of sites induced by the identity functor on
$\mathcal{C}(K)$.

Since every $J_{\mathrm{atom}}$-covering sieve is trivial, the
identity functor is continuous with respect to the two topologies.
Hence it defines a morphism of sites.

The associated inverse-image functor
\[
g^* :
\mathrm{Sh}(\mathcal{C}(K),J_{\mathrm{atom}})
\to
\mathrm{Sh}(\mathcal{C}(K),J)
\]
is given by precomposition with the identity functor followed by
sheafification with respect to $J$.

The direct-image functor
\[
g_* :
\mathrm{Sh}(\mathcal{C}(K),J)
\to
\mathrm{Sh}(\mathcal{C}(K),J_{\mathrm{atom}})
\]
is given by restriction along the identity functor.

Since precomposition preserves finite limits, the inverse-image
functor $g^*$ is left exact. Therefore $(g^*,g_*)$ defines a
geometric morphism between the two sheaf topoi
(see \cite[Ch.~III]{maclane1992sheaves};
\cite[Ch.~C2]{johnstone2002elephant}).
\end{proof}

\begin{proposition}[Essential geometric morphism]
\label{prop:essential}

The geometric morphism of
Theorem~\ref{thm:interpretive-morphism}

\[
\mathrm{Sh}(\mathcal{C}(K),J)
\longrightarrow
\mathrm{Sh}(\mathcal{C}(K),J_{\mathrm{atom}})
\]

\noindent is essential.

\end{proposition}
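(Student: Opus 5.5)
The plan is to make both functors of the geometric morphism of Theorem~\ref{thm:interpretive-morphism} explicit, and then produce a left adjoint to the inverse image $g^*$ by Kan extension along a full subcategory.

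\textbf{Step 1: identify the target topos.} In the free category $\mathcal{C}(K)$ the only isomorphisms are the identities. A composite of two non-empty paths is never an empty path, so no non-identity morphism is invertible. Hence the $J_{\mathrm{atom}}$-covering sieves of $e$ are exactly the maximal sieves. Therefore $J_{\mathrm{atom}}$ is the trivial topology and
\[
\mathrm{Sh}(\mathcal{C}(K),J_{\mathrm{atom}})=\widehat{\mathcal{C}(K)} .
\]
With this identification, $g_*$ becomes the inclusion $i$ of $J$-sheaves into presheaves, and $g^*$ becomes the sheafification functor $a$. Essentiality therefore amounts to exhibiting a left adjoint $g_!\dashv a$, equivalently to showing that $a$ preserves all small limits.

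\textbf{Step 2: describe $J$ concretely.} By Definition~\ref{def:topology}, any single morphism $f:e_i\to e$ already covers $e$: every entity reachable from $e$ is then reachable from $e_i$ through $f$. So the $J$-covering sieves on $e$ are exactly the non-empty sieves on $e$. Call $e$ \emph{$J$-irreducible} if its only covering sieve is the maximal one. Let $\mathcal{D}\subseteq\mathcal{C}(K)$ be the full subcategory on the $J$-irreducible objects.

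\textbf{Step 3: show that $J$ is rigid with respect to $\mathcal{D}$.} The aim is that every object $e$ is covered by the family of all morphisms $d\to e$ with $d\in\mathcal{D}$. Granting this, the comparison lemma gives an equivalence
\[
\mathrm{Sh}(\mathcal{C}(K),J)\simeq\widehat{\mathcal{D}},
\]
under which $a$ becomes restriction $\widehat{\mathcal{C}(K)}\to\widehat{\mathcal{D}}$ along the inclusion $\mathcal{D}\hookrightarrow\mathcal{C}(K)$. Restriction has a left adjoint, namely left Kan extension $\mathrm{Lan}$ along the inclusion. Setting $g_!=\mathrm{Lan}$ gives $g_!\dashv g^*$, so the morphism is essential. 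In fact $g^*$ then also preserves limits, since restriction preserves limits, which is a useful consistency check.

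\textbf{Main obstacle.} I expect the hard step to be Step~3, that is, identifying the sheaves concretely and proving rigidity. A non-empty sieve on $e$ is maximal exactly when it contains $\mathrm{id}_e$, so the $J$-irreducible objects are those with no incoming morphisms other than identities: entities with no incoming triples at all, or, if incoming paths only from itself are counted, the situation becomes subtle. I would first try the case where the underlying digraph is acyclic and every entity is reachable from a source. In that case every object receives a morphism from a source in $\mathcal{D}$, which gives rigidity. If cycles or infinite backward chains are present, rigidity can fail. I would then instead verify directly that $a$ preserves arbitrary products and equalisers, and apply the special adjoint functor theorem, using that Grothendieck topoi are locally presentable. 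If that direct verification fails for the cyclic case, the statement would need an acyclicity hypothesis on $K$.
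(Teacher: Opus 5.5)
Your Step 1 is correct, and it is sharper than the paper's own argument. $\mathcal{C}(K)$ has no non-identity isomorphisms, so $J_{\mathrm{atom}}$ is the trivial topology. Hence $g_*$ is the inclusion of $J$-sheaves into presheaves, $g^*$ is $J$-sheafification $a$, and essentiality means exactly that $a$ has a left adjoint. The paper instead takes a left Kan extension along the identity functor. That is left adjoint only to the precomposition part of $g^*$ and ignores the sheafification, so it does not settle the question.

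However, your proposal does not prove the statement, and it cannot be completed without extra hypotheses, because the statement is false in general. Your closing suspicion is right. Take $E=\{e\}$ and $T=\{\tau=(e,p,e)\}$.
\begin{itemize}
\item $\mathcal{C}(K)$ is the monoid $(\mathbb{N},+)$.
\item The sieves on $e$ are $\emptyset$ and $S_k=\{\tau^n : n\ge k\}$, and the $J$-covering sieves are exactly the $S_k$. Here this really is a Grothendieck topology, since $\mathbb{N}$ is commutative.
\item A presheaf is a set with an endomap $\sigma=F(\tau)$, and the sheaf condition for $S_k$ says precisely that $\sigma^k$ is bijective.
\item Each $S_k\hookrightarrow y(e)$ is covering, so $a(S_k)\cong a(y(e))$, which is nonempty.
\item But $\bigcap_k S_k=\emptyset$ and $a(\emptyset)=\emptyset$; the empty presheaf is a sheaf because $\emptyset$ does not cover.
\end{itemize}
So $a$ does not preserve this wide pullback, it has no left adjoint, and the morphism is not essential. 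In particular, your fallback of checking that $a$ preserves all limits and invoking the special adjoint functor theorem must fail.

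Step 2 is also wrong as stated, and the paper's stability argument in Proposition~\ref{prop:topology} does not rescue it. The nonempty sieves are not stable under pullback once some entity has two incoming triples. In the paper's example, $\tau_3^*\langle\tau_1\rangle=\emptyset$. So the topology generated by the path-covering families makes $\emptyset$ cover $D$, and then, by stability and local character, every object of that weakly connected component. All sheaves there are terminal, not $\mathbf{Set}\times\mathbf{Set}$ as your comparison-lemma step would give. Indeed, even the single sieve $\langle\tau_1\rangle$ already forces $F(\tau_1)$ to be bijective.

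With $J$ computed correctly, your rigidity strategy is exactly the right tool:
\begin{itemize}
\item On a component containing an entity with at least two incoming triples, $J$ is the maximal topology. There are no $J$-irreducible objects, and the empty family covers.
\item On a component where every entity has at most one incoming triple, the Ore condition holds and $J$ is the nonempty-sieve topology. It is rigid, with $\mathcal{D}$ the roots, exactly when every backward path terminates. Otherwise the intersection argument above shows non-essentiality.
\end{itemize}
So, reading Definition~\ref{def:topology} as you do (exactly the nonempty families cover), the correct hypothesis for finite $K$ is this: no weakly connected component in which every entity has at most one incoming triple contains a directed cycle. Acyclicity suffices but is not necessary.
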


\begin{proof}
Let
\[
g :
\mathrm{Sh}(\mathcal{C}(K),J)
\longrightarrow
\mathrm{Sh}(\mathcal{C}(K),J_{\mathrm{atom}})
\]
be the geometric morphism of
Theorem~\ref{thm:interpretive-morphism}.

Its inverse–image functor
\[
g^* :
\mathrm{Sh}(\mathcal{C}(K),J_{\mathrm{atom}})
\to
\mathrm{Sh}(\mathcal{C}(K),J)
\]
is induced by precomposition with the identity functor on
$\mathcal{C}(K)$ followed by sheafification.

Since precomposition with a functor between small categories
admits left Kan extensions, the functor $g^*$ has a left adjoint

\[
g_! :
\mathrm{Sh}(\mathcal{C}(K),J)
\to
\mathrm{Sh}(\mathcal{C}(K),J_{\mathrm{atom}}).
\]

Therefore the geometric morphism is essential
(see \cite[Ch.~III]{maclane1992sheaves};
\cite[Ch.~C2]{johnstone2002elephant}).
\end{proof}











\begin{remark}[Interpretation of the adjunction]
\label{rem:adjunction-interpretation}

The essential geometric morphism constructed above determines an
adjoint triple of functors

\[
g_! \;\dashv\; g^* \;\dashv\; g_* .
\]

These functors describe three complementary semantic operations
between the two interpretive environments.

The inverse-image functor $g^*$ transports interpretations from the
atomic topos to the path-topos, allowing locally defined meanings to
propagate along relational paths. The direct-image functor $g_*$
aggregates contextual interpretations back into the purely local
setting of the atomic topology. The additional left adjoint $g_!$
corresponds to the free extension of locally specified information
into the richer contextual environment.

In this sense, the adjoint triple formalises three modes of semantic
translation between the two regimes of interpretation encoded by the
two Grothendieck topologies.
\end{remark}

Topos theory therefore equips the categorical representation of a
knowledge graph with an internal logical structure. In particular,
every Grothendieck topos supports an internal higher-order
intuitionistic logic in which propositions correspond to subobjects
and logical operations arise from categorical constructions
\cite{maclane1992sheaves,johnstone1977,hogan2021kg,johnstone2002elephant}. This logical
interpretation provides a natural setting for formal reasoning about
context-dependent relational information encoded in knowledge graphs.

\subsection{Subobject classifier and internal logic}\label{subsec:logic}

Topos theory provides a setting in which logical reasoning can be
carried out inside a categorical structure \cite{freyd1990}. In particular, every Grothendieck topos admits an internal higher-order intuitionistic
logic whose truth values are represented by the subobject classifier
$\Omega$. This internal logic allows propositions about objects of the
topos to be interpreted categorically through morphisms and subobjects
\cite{maclane1992sheaves,johnstone2002elephant}.

Every topos has a \emph{subobject classifier} $\Omega$, a distinguished
object such that subobjects of any object $X$ correspond naturally to
morphisms $X \to \Omega$~\cite{maclane1992sheaves}. In
$\mathrm{Sh}(\mathcal{C}(K), J)$, the subobject classifier assigns to
each entity $e$ the set of \emph{sieves} on $e$ that belong to the
topology $J$. The internal logic of the topos is in general
intuitionistic: the law of excluded middle may fail, reflecting the
fact that truth values are context-dependent and not globally
determined.

\section{Semantic Interpretation}\label{sec:semantics}

The topos $\mathrm{Sh}(\mathcal{C}(K), J)$ provides a formal setting
for context-dependent semantics of knowledge graphs. 
Similar sheaf-theoretic approaches have been used in several areas of
mathematics and theoretical computer science to model distributed or
contextual information. In such settings, local data assigned to
objects of a category are required to satisfy compatibility conditions
on overlaps so that they can be assembled into coherent global
structures \cite{maclane1992sheaves,johnstone2002elephant}. The same
principle applies here: semantic information attached to entities and
relations in a knowledge graph can be organised locally and then
integrated through the sheaf structure.
Objects of this
topos may be interpreted as semantic structures defined over the
categorical representation of the knowledge graph. In this setting,
semantic information is assigned locally and propagated along
relational paths through the functorial structure of sheaves.

A useful conceptual distinction between the combinatorial structure of
the graph and its semantic interpretation can be articulated through
the philosophical distinction between \emph{being} and \emph{appearing}
developed in philosophical discussions of ontology and the logics of appearance (\cite{badiou2006}). In this terminology the
underlying knowledge graph records the \emph{being} of relational data:
entities and triples exist as combinatorial objects. The associated
sheaf topos, by contrast, provides a mathematical environment in which
these relations \emph{appear} through context-dependent interpretations.
Different Grothendieck topologies therefore correspond to different
regimes of appearance, determining how relational information may be
locally interpreted and globally integrated.
In Badiou's sense, the topos does not merely represent the
data of the graph but organises the ways in which that data can appear
through compatible local interpretations.

The semantic role of sheaves can be understood as assigning
context-dependent information to the entities of the knowledge graph.
Each entity carries a set of possible interpretations, while relational
paths induce restriction maps between these sets. In this way the
sheaf structure formalises how semantic information propagates through
the relational structure of the graph and how locally defined meanings
must agree when combined into global interpretations
\cite{maclane1992sheaves,johnstone2002elephant}.

\begin{proposition}\label{prop:sections}
Let $F$ be a sheaf on the site $(\mathcal{C}(K),J)$. For every morphism
$f:e' \to e$ in $\mathcal{C}(K)$ there is a restriction map
\[
F(f):F(e)\to F(e')
\]
such that the assignment $e\mapsto F(e)$ and $f\mapsto F(f)$ defines a
contravariant functor from $\mathcal{C}(K)$ to $\mathbf{Set}$.
\end{proposition}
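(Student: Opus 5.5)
The statement is essentially definitional: a sheaf is in particular a presheaf. So the plan is to unpack Definition~\ref{def:sheaf} and observe that the restriction maps are exactly the action of the underlying presheaf on morphisms. Nothing about the topology $J$ or the sheaf condition is needed; the equaliser condition is an extra property imposed on a functor that is already given.

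\textbf{Step 1: define the restriction maps.} By Definition~\ref{def:sheaf}, $F$ is a functor $F:\mathcal{C}(K)^{\mathrm{op}}\to\mathbf{Set}$ satisfying the gluing condition. A morphism $f:e'\to e$ of $\mathcal{C}(K)$ is a morphism $e\to e'$ of $\mathcal{C}(K)^{\mathrm{op}}$. Hence $F$ assigns to it a function $F(f):F(e)\to F(e')$, and we take this as the restriction map.

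\textbf{Step 2: check the functor laws.} These are inherited directly from the functoriality of $F$ on the opposite category.
\begin{itemize}
\item Identities: $F(\mathrm{id}_e)=\mathrm{id}_{F(e)}$.
\item Composition: for $f:e'\to e$ and $g:e''\to e'$, composition in $\mathcal{C}(K)^{\mathrm{op}}$ reverses order, which gives $F(f\circ g)=F(g)\circ F(f)$.
\end{itemize}
This is exactly what it means for $e\mapsto F(e)$, $f\mapsto F(f)$ to be a contravariant functor $\mathcal{C}(K)\to\mathbf{Set}$.

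\textbf{Step 3 (optional): describe $F$ on paths.} Since morphisms of $\mathcal{C}(K)$ are paths $\tau_k\circ\cdots\circ\tau_1$, the maps $F(\tau)$ on generating triples determine $F$ on every path. For such a path,
\[
F(\tau_k\circ\cdots\circ\tau_1)=F(\tau_1)\circ\cdots\circ F(\tau_k).
\]
Identities, i.e.\ empty paths, are sent to identity functions. This is the dual form of Theorem~\ref{thm:freecat-universal}, applied with $\mathcal{D}=\mathbf{Set}^{\mathrm{op}}$.

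There is no real obstacle here. The only point needing care is the variance bookkeeping: the reversal of composition order in $\mathcal{C}(K)^{\mathrm{op}}$ must be tracked correctly so that the restriction along a composite path is the composite of the restrictions in reverse order.
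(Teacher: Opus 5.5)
Your proposal is correct and follows the same route as the paper: both unpack Definition~\ref{def:sheaf}, read off $F(f):F(e)\to F(e')$ from the action of $F$ on $\mathcal{C}(K)^{\mathrm{op}}$, and inherit the identity and composition laws from functoriality, with the sheaf condition playing no role. Your variance bookkeeping is cleaner than the paper's: for $g:e''\to e'$ and $f:e'\to e$ you write $F(f\circ g)=F(g)\circ F(f)$, which is correct, whereas the paper's displayed identity $F(g\circ f)=F(f)\circ F(g)$ does not match its own labelling of the arrows.
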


\begin{proof}
By Definition~\ref{def:sheaf}, a sheaf is a contravariant functor
$F:\mathcal{C}(K)^{\mathrm{op}}\to\mathbf{Set}$ satisfying the sheaf
condition with respect to the topology $J$. Therefore each morphism
$f:e'\to e$ induces a map $F(f):F(e)\to F(e')$. Functoriality implies
that $F(\mathrm{id}_e)=\mathrm{id}_{F(e)}$ and
\[
F(g\circ f)=F(f)\circ F(g)
\]
for any composable morphisms $e''\xrightarrow{g}e'\xrightarrow{f}e$.
Hence the assignment is a contravariant functor.
\end{proof}

\subsection{Sections as interpretations}

A sheaf $F$ assigns to each entity $e$ a set $F(e)$ of
\emph{local sections}, representing the interpretations or meanings
available at $e$. The restriction maps $F(f):F(e)\to F(e')$ for a
morphism $f:e'\to e$ encode how semantic information propagates along
relational paths in the knowledge graph.

\subsection{Contextual truth}

The interpretation of truth in a topos differs from classical logic.
Instead of a two-valued truth set, propositions take values in the
subobject classifier $\Omega$, which encodes degrees of validity
relative to the covering structure of the site. Consequently,
statements about objects of the knowledge graph may hold locally in
certain contexts while failing globally, reflecting the inherently
context-dependent nature of relational information
\cite{maclane1992sheaves,johnstone2002elephant}.

A \emph{global section} of $F$ is an element of $F(e_{\mathrm{top}})$
for a terminal object when such an object exists, or equivalently a
compatible family of local sections defined across all entities of the
site. The sheaf condition ensures that locally consistent
interpretations can be uniquely assembled into a global one. In this
sense, a semantic statement about the knowledge graph is globally true
only when it is locally valid in each relevant context and these local
interpretations agree on overlaps.

\subsection{Local-to-global reasoning}

The gluing axiom of Definition~\ref{def:sheaf} expresses the
local-to-global principle: compatible local data can be uniquely
assembled into global data. In the context of knowledge graphs this
means that if a semantic assignment—such as an iconographic meaning, a
stylistic classification, or a confidence score—is defined consistently
on each part of a covering of an entity, then it extends uniquely to
the whole entity.

For instance, on a simple graph $A \to B \leftarrow C$, a sheaf of
interpretations assigns sets $F(A),F(B),F(C)$ with restriction maps
along the two arrows, allowing compatible local interpretations at
$A$ and $C$ to glue into a global interpretation at $B$.

\subsection{Change of semantic context}

The existence of the geometric morphism of
Theorem~\ref{thm:interpretive-morphism} shows that the same knowledge
graph may support different semantic interpretations depending on the
Grothendieck topology imposed on its free category.
The geometric morphism therefore formalizes a change of semantic regime,
transporting interpretations between the purely local environment and the
contextual environment generated by relational propagation.

The atomic topology corresponds to a strictly local interpretation in
which entities are interpreted independently of relational propagation.
By contrast, the path-covering topology allows semantic information to
propagate along composable relational paths. The geometric morphism
between the associated topoi therefore formalises the passage from a
local interpretation of relational data to a context-sensitive one in
which meanings can be integrated across relational neighbourhoods.

From this perspective, the topos
\[
\mathrm{Sh}(\mathcal{C}(K),J)
\]
encodes a richer semantic environment than
\[
\mathrm{Sh}(\mathcal{C}(K),J_{\mathrm{atom}}),
\]
since the topology $J$ introduces additional covering relations that
allow information to be aggregated along paths of triples.

\subsection{Change of topos and semantic events}

The passage between the two semantic environments introduced in
Section~\ref{sec:sites} can also be interpreted as a transition
between regimes of appearance. In the language of category theory,
this transition is formalised by the geometric morphism of
Theorem~\ref{thm:interpretive-morphism} relating the topoi
\[
\mathrm{Sh}(\mathcal{C}(K),J)
\quad\text{and}\quad
\mathrm{Sh}(\mathcal{C}(K),J_{\mathrm{atom}}).
\]

Conceptually, such a transition may be understood as a change in the
logical environment in which relational information is interpreted.
Where the atomic topology encodes a strictly local interpretation of
entities and relations, the path-covering topology introduces a richer
context in which semantic information propagates along relational
paths.

This perspective resonates with the interpretation of topoi as
structures governing regimes of appearance in mathematical ontology
\cite{badiou2006}. In this sense, modifying the Grothendieck topology
changes the conditions under which relational statements become
semantically meaningful, thereby producing a new interpretive
environment for the same underlying relational data.

\section{Illustrative Example}\label{sec:example}

We illustrate the framework developed in the previous sections on a
small knowledge graph and show explicitly how the constructions of
incidence matrices, line digraphs, free categories and sheaves appear
in a concrete case.

\subsection{Knowledge graph}

Let
\[
E=\{A,B,C,D\}
\]
and
\[
T=\{\tau_1=(A,r_1,B),\;
     \tau_2=(A,r_2,C),\;
     \tau_3=(D,r_3,B),\;
     \tau_4=(D,r_4,C)\}.
\]

Thus the graph contains two triples with head $A$ and two triples with
head $D$.

\subsection{Incidence matrices}

With the orderings $(A,B,C,D)$ and
$(\tau_1,\tau_2,\tau_3,\tau_4)$ the head and tail incidence matrices are

\[
H^{(h)}=
\begin{pmatrix}
1 & 1 & 0 & 0\\
0 & 0 & 0 & 0\\
0 & 0 & 0 & 0\\
0 & 0 & 1 & 1
\end{pmatrix},
\qquad
H^{(t)}=
\begin{pmatrix}
0 & 0 & 0 & 0\\
1 & 0 & 1 & 0\\
0 & 1 & 0 & 1\\
0 & 0 & 0 & 0
\end{pmatrix}.
\]

\subsection{Line digraphs}

Computing
\[
M_{\mathrm{out}}=(H^{(h)})^\top H^{(h)}
\]
and setting the diagonal to zero gives

\[
A_{\mathrm{out}}=
\begin{pmatrix}
0 & 1 & 0 & 0\\
1 & 0 & 0 & 0\\
0 & 0 & 0 & 1\\
0 & 0 & 1 & 0
\end{pmatrix}.
\]

This example illustrates concretely the structural decomposition
established in Section~\ref{sec:line}. Triples sharing the same head
entity generate strongly connected components in the out-line digraph.
In the present case the triples $\tau_1$ and $\tau_2$ share the head
$A$, while $\tau_3$ and $\tau_4$ share the head $D$. Consequently the
graph $L_{\mathrm{out}}(K)$ splits into two components determined by
these shared-head relations.

Such decompositions are closely related to classical line graph
constructions studied in graph theory, where vertices correspond to
edges of the original graph and adjacency reflects incidence relations
between edges \cite{beineke1970,harary1969}. The knowledge-graph
formulation adapts this idea by treating triples as vertices and
identifying adjacency through shared head or tail entities.

\begin{proposition}\label{prop:example-components}
The graph $L_{\mathrm{out}}(K)$ consists of two strongly connected
components,
\[
\{\tau_1,\tau_2\}
\quad\text{and}\quad
\{\tau_3,\tau_4\},
\]
each forming a complete directed graph.
\end{proposition}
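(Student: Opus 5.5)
The plan is to apply Theorem~\ref{thm:scc} to the concrete graph $K$ with $E=\{A,B,C,D\}$ and then check the result against the matrix $A_{\mathrm{out}}$ computed above.

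First, compute the head fibres $T_e=\{\tau\in T: h(\tau)=e\}$ of Proposition~\ref{prop:component-partition}. From the list of triples we read off
\[
T_A=\{\tau_1,\tau_2\},\qquad T_D=\{\tau_3,\tau_4\},\qquad T_B=T_C=\emptyset .
\]
So the equivalence relation $\sim_h$ on $T$ has exactly two classes, $\{\tau_1,\tau_2\}$ and $\{\tau_3,\tau_4\}$. These classes are the ones the statement names, since empty fibres contribute no vertices.

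Second, confirm the edge structure directly from the matrix
\[
A_{\mathrm{out}}=(H^{(h)})^\top H^{(h)}-I_4 ,
\]
which is justified by Lemma~\ref{lem:line-operator}. Its only nonzero entries are at $(1,2)$, $(2,1)$, $(3,4)$ and $(4,3)$. Hence $L_{\mathrm{out}}(K)$ contains exactly the edges $\tau_1\leftrightarrow\tau_2$ and $\tau_3\leftrightarrow\tau_4$. This gives two conclusions:
\begin{itemize}
\item Each class induces the complete directed graph on two vertices (a single pair of opposite arcs).
\item No arc joins the two classes.
\end{itemize}

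Third, conclude. Within each class, any two vertices reach each other by the opposite arcs, so each class lies in a single strongly connected component. Because there are no arcs between the classes, neither class can reach the other. Hence the strongly connected components are exactly these two classes, in agreement with Theorem~\ref{thm:scc} and Corollary~\ref{cor:component-degree}: with $m_A=m_D=2$, every vertex has in-degree and out-degree $1$.

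There is no real obstacle here. The only point needing care is the absence of cross-class arcs, and this follows from $A\neq D$ together with the fact that each triple has exactly one head.
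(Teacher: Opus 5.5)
Your proposal is correct and follows the paper's own route. The paper likewise specializes the shared-head decomposition of Theorem~\ref{thm:scc} by observing that $\tau_1,\tau_2$ share the head $A$ and $\tau_3,\tau_4$ share the head $D$. Your extra check against $A_{\mathrm{out}}$ (nonzero only at $(1,2),(2,1),(3,4),(4,3)$) makes the argument self-contained, since it directly confirms the opposite arcs within each class and the absence of arcs between the two classes.
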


\begin{proof}
This follows directly from the structural decomposition established in
Section~\ref{sec:line}. In the present graph the triples
$\tau_1,\tau_2$ share the head $A$ and the triples
$\tau_3,\tau_4$ share the head $D$, so the strongly connected
components are exactly the two corresponding equivalence classes.
\end{proof}

\subsection{Free category}

The free category $\mathcal{C}(K)$ has objects
\[
\{A,B,C,D\}
\]
and four generating morphisms
\[
\tau_1:A\to B,\quad
\tau_2:A\to C,\quad
\tau_3:D\to B,\quad
\tau_4:D\to C .
\]
Since no triple ends at the starting point of another triple,
no nontrivial compositions occur, and the only non-identity morphisms
are the generating ones.

The categorical structure of this example can also be represented
diagrammatically. The generating morphisms correspond to arrows between
objects of the free category:

\[
\begin{tikzcd}
A \arrow[r,"\tau_1"] & B \\
A \arrow[r,"\tau_2"] & C \\
D \arrow[r,"\tau_3"] & B \\
D \arrow[r,"\tau_4"] & C
\end{tikzcd}
\]

Such diagrammatic representations are standard in category theory for
visualising morphisms and their relations
\cite{maclane1998cwm,awodey2010,riehl2016}.

\subsection{Covering and sheaf}

Consider the object $B$ in $\mathcal{C}(K)$.
The morphisms
\[
\tau_1:A\to B,
\qquad
\tau_3:D\to B
\]
form a covering family of $B$.

Let $F$ be a sheaf on $(\mathcal{C}(K),J)$.
The sheaf assigns sets
\[
F(A),\quad F(D),\quad F(B)
\]
together with restriction maps induced by the morphisms
$\tau_1$ and $\tau_3$.

The sheaf condition requires that compatible local sections
\[
s_1\in F(A),\qquad s_3\in F(D)
\]
glue to a unique element of $F(B)$.
In this example the pullback
\[
A\times_B D
\]
is empty, so any pair $(s_1,s_3)$ is compatible.
Consequently the set of sections at $B$ is naturally identified with
the product
\[
F(B)\cong F(A)\times F(D).
\]

Thus the interpretation associated with the entity $B$ is determined
by the local interpretations attached to the entities $A$ and $D$.

\section{Conclusion}\label{sec:conclusion}

The framework therefore connects three levels of structure in knowledge graphs: combinatorial
graph structure, categorical composition in the free category, and
topos-theoretic semantics supporting local-to-global reasoning.
The incidence representation provides an algebraic description of the
interaction between entities and triples, the associated line knowledge
digraphs reveal structural relations between triples themselves, and
the categorical construction of the free category $\mathcal{C}(K)$
allows relational paths to be interpreted through categorical
composition. These constructions together provide a bridge between
classical graph-theoretic methods and categorical semantics
\cite{harary1969,maclane1998cwm}.
The geometric morphism, in particular, formalizes a change of semantic regime,
transporting interpretations between the purely local environment and the
contextual environment generated by relational propagation.

As a matter of fact, we have developed a categorical framework for knowledge graphs that
integrates three complementary levels of structure. At the combinatorial
level, knowledge graphs were represented through head and tail incidence
matrices and analysed using the associated line knowledge digraph
constructions. At the categorical level, the relational structure of a
knowledge graph was interpreted through the free category generated by
its entities and triples, allowing compositional reasoning about
relational paths. At the semantic level, the categorical structure was
equipped with a Grothendieck topology whose associated category of
sheaves forms a topos, providing a mathematical environment for
local-to-global interpretation of relational information.

Within this framework we established several structural results. The
line knowledge digraphs were shown to admit a decomposition determined
by equivalence classes of triples sharing a common head or tail entity,
and the line digraph constructions were analysed from a categorical
perspective through their functorial behaviour with respect to morphisms
of knowledge graphs. We further introduced a path-covering Grothendieck
topology on the free category generated by a knowledge graph and proved
that the resulting category of sheaves forms a Grothendieck topos. The
internal logic of this topos provides a natural formal setting for
context-dependent reasoning on relational data.

Several directions for further work arise naturally from this framework.
One direction concerns computational aspects of the theory, such as
algorithms for evaluating sheaf conditions or computing semantic
sections on large knowledge graphs. Finally, the framework suggests
connections with logical and conceptual formalisms used in knowledge
representation, including description logics and formal concept
analysis, which may provide additional bridges between categorical
methods and practical semantic modelling.


\section*{Declarations}

\textbf{Competing interests.} The author declares no competing interests.

\noindent \textbf{Funding.} No external funding was received for this work.


\bibliography{paper}

\end{document}